\newif\ifFull
\Fulltrue

\ifFull
  \documentclass[11pt]{article}
  \usepackage[margin=1in]{geometry}
  
  \usepackage{amsthm}
  \newtheorem{lemma}{Lemma}
  \newtheorem{theorem}{Theorem}
  \newtheorem{corollary}{Corollary}

  \usepackage{authblk}

  \usepackage{hyperref}
\else
  \documentclass[twoside,legno,twocolumn]{article}
  \usepackage{ltexpprt}
  % Next two commands needed to "replace" hyperref
  \newcommand{\url}[1]{{\tt #1}}
  \newcommand{\href}[2]{{\tt #2}}
\fi

\usepackage{graphicx}\graphicspath{{figures/}}
\usepackage{enumerate, cite}
\usepackage{amsmath, amssymb, color}
\usepackage{todonotes}

%\usepackage{mathptm} 
%tweak times math
% http://rvb.mytanet.de/times-with-normal-mathcal-and-summation-sign.shtml
%\DeclareMathAlphabet{\mathcal}{OMS}{cmsy}{m}{n}
%\DeclareSymbolFont{largesymbols}{OMX}{cmex}{m}{n}
%\usepackage[compact]{titlesec}

\newcommand{\csqrtn}{\left\lceil \sqrt{n}\hspace{0.2em} \right\rceil}
\long\def\omitsection#1{\relax}

\newcommand{\abs}[1]{{\left\vert#1\right\vert}}
\DeclareMathOperator{\plot}{plot}
\DeclareMathOperator{\board}{board}

\ifFull
  \title{Small Superpatterns for Dominance Drawing\thanks{This research was supported in part by the National Science Foundation under grants 0830403 and 1217322.}}
  \author{Michael J. Bannister}
  \author{William E. Devanny}
  \author{David Eppstein}
  \affil{Department of Computer Science, University of California, Irvine}
\else
  \title{\Large Small Superpatterns for Dominance Drawing\thanks{This research was supported in part by the National Science Foundation under grants 0830403 and 1217322.}}
  \author{Michael J. Bannister \and
          William E. Devanny \and
          David Eppstein}
\fi
  
\begin{document}
%\pagestyle{empty}
%\listoftodos
%\newpage

\date{}
% Adjust pages counter as SODA/ANALCO/ALENEX does not count title page.
\maketitle\thispagestyle{empty}\setcounter{page}{0}

\begin{abstract}
\ifFull\else
\small\baselineskip=9pt
\fi
We exploit the connection between dominance drawings of directed acyclic graphs and permutations, in both directions, to provide improved bounds on the size of universal point sets for certain types of dominance drawing and on superpatterns for certain natural classes of permutations. In particular we show that there exist universal point sets for dominance drawings of the Hasse diagrams of width-two partial orders of size $O(n^{3/2})$, universal point sets for dominance drawings of $st$-outerplanar graphs of size $O(n\log n)$, and universal point sets for dominance drawings of directed trees of size $O(n^2)$. We show that $321$-avoiding permutations have superpatterns of size $O(n^{3/2})$, riffle permutations ($321$-, $2143$-, and $2413$-avoiding permutations) have superpatterns of size $O(n)$, and the concatenations of sequences of riffles and their inverses have superpatterns of size $O(n\log n)$. Our analysis includes a calculation of the leading constants in these bounds. 
\end{abstract}

\ifFull
\pagestyle{plain}
\newpage
\fi

\section{Introduction}
The \emph{universal point set problem} asks for a sequence of point sets $U_n$ in the plane such that every $n$-vertex planar graph can be straight-line embedded with vertices in $U_n$ and such that the cardinality of $U_n$ is as small as possible. Known upper bounds on the size of $U_n$ are quadratic, with a constant that has been improved from $1$ to $1/4$ over the last 25 years~\cite{DeFPacPol-STOC-88,Sch-SODA-90,Bra-TGGT-08,BanCheDev-GD-2013}. Surprisingly, the best lower bounds on the size of $U_n$ are only linear~\cite{ChrKar-SN-89,Kur-IPL-04,Mon-EaPGoaGP-12}. Reconciling this gap is a fundamental open problem in graph drawing~\cite{BraEppGoo-GD-03,DemORo-TOPP-12,Moh-OPG-07}.

Although the standard universal point set problem asks for point sets supporting planar straight line drawings of planar graphs, many researchers have previously asked similar questions about other drawing styles and other classes of graphs. Dujmovi\'{c} \emph{et al.} considered universal point sets for planar graphs with bends in the edges, where bend points must be placed on points in the universal set, showing that there exist point sets of size $O(n)$ for three bends, $O(n\log n)$ for two bends, and $O(n^2 / \log n)$ for one bend~\cite{DujEvaLaz-CG-2013}. If bends may be placed freely in the plane, then a construction of Everett \emph{et al.} using $n$ points is universal for planar graphs~\cite{EveLazLio-DCG-10}. Angelini \emph{et al.} constructed universal point sets of size $n$, lying on a parabolic path, for planar graphs where the edges are drawn as circular arcs~\cite{AngEppFra-CCCG-2013}. If edges may consist of two smoothly connected semicircles, then $n$ collinear points are universal for planar graphs~\cite{GioLioMch-ISAAC-07,BekKauKob-GD-12}. As well as for planar graphs, universal point sets have been considered for outerplanar graphs~\cite{GriMohPac-AMM-91}, planar partial 3-trees~\cite{FulTot-WADS-13}, simply-nested planar graphs~\cite{AngDibKau-GD-2012,BanCheDev-GD-2013}, pseudoline arrangement graphs~\cite{Epp-GD-13}, and planar graphs of bounded pathwidth~\cite{BanCheDev-GD-2013}.

\ifFull
\begin{figure}[b]
\else
\begin{figure*}
\fi
\centering\includegraphics[width=0.95\textwidth]{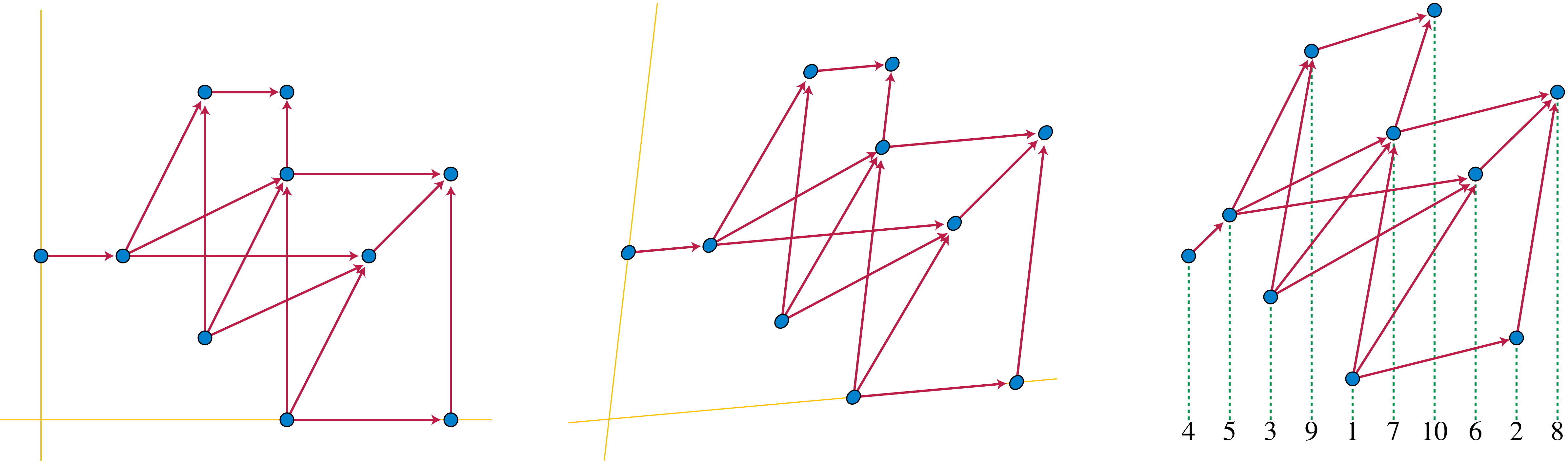}
\caption{Conversion of a dominance drawing (left) to a permutation (right) by performing a shearing transformation to eliminate pairs of points with equal coordinates, and then replacing the point coordinates by their positions in the sorted ordering of the coordinate values}
\label{fig:dom2perm}
\ifFull
\end{figure}
\else
\end{figure*}
\fi

In this paper we construct universal point sets for \emph{dominance drawing}, a standard style of graph drawing for directed acyclic graphs in which
\emph{reachability} (whether there is a path from $u$ to $v$)
must be the same relation as \emph{dominance} (whether both coordinates of $v$ are greater than or equal to both coordinates of~$u$). Equivalently, each edge must be oriented upwards, rightwards, or both, and every two vertices $u$ and~$v$ that form the lower left and upper right corners of an empty axis-aligned rectangle must be adjacent. A planar directed acyclic graph with a single source~$s$ and sink~$t$ has a dominance drawing if and only if it is a \emph{planar $st$-graph}, a graph in which $s$ and $t$ share a face in at least one embedding of the graph; if in addition the graph is transitively reduced, then its drawing is automatically non-crossing~\cite{GD-Book,DiBTamTol-DCG-92}.

Any dominance drawing can be transformed by shearing into a dominance drawing in which no two points share an $x$ or $y$ coordinate. Changing the coordinates of the points without changing their coordinate-wise sorted ordering preserves the properties of a dominance drawing, so we may assume without loss of generality that the coordinates $(x_i,y_i)$ of the points are distinct integers from $1$ to $n$.
In this way, every dominance drawing corresponds to a permutation $\sigma$ of size $n = \abs{V}$ that maps $x_i$ to $y_i$ (Figure~\ref{fig:dom2perm}). This connection between dominance drawing and permutations was exploited by Bannister \emph{et al.} to construct universal point sets of size $n^2 / 2 + \Theta(n)$ for dominance drawings of planar $st$-graphs~\cite{BanCheDev-GD-2013}. This construction relies on results in the study of permutation patterns. A permutation $\pi$ is said to be a \emph{pattern} of a permutation $\sigma$, if there exists a subsequence of $\sigma$ with elements having the same relative ordering as the elements in $\pi$. Mathematicians working in permutation patterns have defined \emph{superpatterns} to be permutations which contain all length $n$ permutations as patterns, and have studied bounds on the size of such permutations~\cite{Arr-EJC-99,EriEriLin-AoC-07,Mil-JCTSA-2009}. The current best upper bound for the size of a superpattern is $n^2 / 2 + \Theta(n)$~\cite{Mil-JCTSA-2009}. The set of points $(i,\sigma_i)$ derived from a superpattern~$\sigma$ forms a universal point set of size $n^2 / 2 + \Theta(n)$ for dominance drawings of transitively reduced planar $st$-graphs, and more generally for all graphs that have dominance drawings.~\cite{BanCheDev-GD-2013}.

In their investigation of universal point sets, Bannister \emph{et al.}~\cite{BanCheDev-GD-2013} generalized superpatterns to $P$-super\-patterns, permutations containing as a pattern every permutation in a set~$P$.  Given a set $F$ of ``forbidden patterns'' let $S_n(F)$ be the set of all length-$n$ permutations avoiding all patterns in $F$. Bannister \emph{et al.{}} constructed $S_n(213)$-superpatterns of size $n^2 / 4 + \Theta(n)$, and used them to produce universal point sets for planar straight-line drawings of size $n^2 / 4 - \Theta(n)$. They also showed that every proper subclass of the $213$-avoiding permutations has near-linear superpatterns and used them to construct near-linear universal point sets for planar graphs of bounded pathwidth. However, as they observed, the permutations arising from dominance drawings of planar $st$-graphs have no forbidden patterns, preventing the smaller superpatterns arising from forbidden patterns from being used in dominance drawings of these graphs.

\paragraph{New results.}
In this paper, we again consider dominance drawings, of both $st$-planar and non-$st$-planar graphs. We show that, unlike for arbitrary $st$-planar graphs, other important classes of graphs have forbidden patterns in the permutations defined by their dominance drawings. Based on this observation, we extend the connection between dominance drawing and superpatterns to construct smaller universal point sets for dominance drawings of these graph classes. Specifically, our contributions include the construction of

\begin{itemize}\itemsep0pt
\item universal point sets for planar dominance drawings of directed trees of size $n^{2}/4 + \Theta(n)$, based on the $213$-avoiding superpatterns from our previous work~\cite{BanCheDev-GD-2013};
\item superpatterns for $321$-avoiding permutations of size $22n^{3/2} + \Theta(n)$;
\item universal point sets for non-planar dominance drawings of the Hasse diagrams of width-$2$ partial orders of size $22n^{3/2} + \Theta(n)$, based on $321$-avoiding superpatterns;
\item superpatterns for the permutations corresponding to \emph{riffle shuffles} of size $2n-1$;
\item universal point sets for dominance drawings of $st$-outerplanar graphs of size $32n\log n+\Theta(n)$, using our superpatterns for riffles and their inverses; and
\item superpatterns for the concatenations of riffles and their inverses, of size $16n\log n+\Theta(n)$, using a simplified version of our universal point sets for $st$-outerplanar graphs.

\end{itemize}

Although of superlinear size, our universal point sets may all be placed into grids with low area. The $st$-outerplanar universal point sets lie in a $O(n) \times O(n\log n)$ grid, and all of our other universal points lie in an $O(n)\times O(n)$ grid, leading to compact drawings of the graphs embedded on them.

%In addition, we prove $\Omega(n \log n)$ lower bounds for the sizes of $S_n(321)$-superpatters and universal point sets for dominance drawings of $st$-outerplanar graphs.

\paragraph{Application.}
The visualization of trees is so frequent as to need no additional motivation. Width-two Hasse diagrams and $st$-outerplanar graphs are less common, but both may arise in the visualization of change histories in distributed version control systems such as git or mercurial. If two editors of a  project repeatedly pull changes from or push them to a shared master repository, then the dependency graph of their version histories will form an $st$-outerplanar graph: their local changes form two paths and the pull events produce edges from one path to the other. If, on the other hand, two editors each maintain both a current version and an older stable version of the project, and each editor sometimes synchronizes his or her current version with the stable version of the other, then a more general width-2 Hasse diagram could result. Larger numbers of editors would lead to partial orders with higher width, beyond the scope of our study.

\section{Preliminaries and notation}

\paragraph{Permutations, patterns, and superpatterns.}

We denote the set of all permutations of the numbers from $1$ to $n$ by $S_n$, and we specify a given permutation as a string of numbers, e.g.,
\ifFull$\else\[\fi
S_3 = \{123, 132, 213, 231, 312, 321\}.
\ifFull$\else\]\fi
If $\sigma$ is a permutation then we will write $\sigma_i$ for the $i^\text{th}$ position of $\sigma$ (starting from position~$1$), and $\abs{\sigma}$ for the number of elements in $\sigma$. For example, if $\sigma = 2143$, then $\sigma_2 = 1$ and $\abs{\sigma} = 4$. We also define the plot of a permutation $\sigma$ as the set of $|\sigma|$ points in the plane given by $\plot(\sigma) = \{(i,\sigma_i) \mid 1 \leq i \leq \abs{\sigma}\}$.

Permutation $\pi$ is a pattern of permutation $\sigma$ if there exists a sequence of integers
\ifFull$\else\[\fi
1 \leq \ell_1 \leq \ell_2 \leq \cdots \leq \ell_\abs{\pi}
\ifFull$\else\]\fi
such that  $\pi_i < \pi_j$ if and only if $\sigma_{\ell_i} < \sigma_{\ell_j}$. Equivalently, $\pi$ is a pattern of $\sigma$ if $\pi$ has the same order type as a subsequence of $\sigma$. A permutation $\sigma$ \emph{avoids} a permutation $\pi$ if $\pi$ is not a pattern of $\sigma$. We denote by $S_n(\pi_1, \pi_2, \ldots, \pi_k)$ the set of length-$n$ permutations avoiding all the patterns $\pi_1, \pi_2, \ldots, \pi_k$.
A \emph{permutation class} is a set of permutation closed under taking patterns, meaning that every pattern of a permutation in the class is also in the class. Every permutation class may be defined by a (possibly infinite) set of \emph{forbidden patterns},  the minimal patterns not belonging to the class and  are, thus, avoided by every permutation in the class. Given a set $P$ of permutations we define a $P$-superpattern to be a permutation $\sigma$ such that every $\pi \in P$ is a pattern of $\sigma$. In particular, we will frequently consider $S_n(F)$-superpatterns for sets $F$ of forbidden patterns.

When working with permutation patterns it will be convenient to work with a compressed form of the plot of a permutation, its chessboard representation. Define the \emph{columns} of a permutation $\sigma$ to be the maximal ascending runs in $\sigma$ (contiguous subsequences that are in monotonically increasing numerical value). Define the \emph{rows} of a permutation $\sigma$ to be the maximal (non-contiguous) subsequences of $\sigma$ that form ascending runs in $\sigma^{-1}$. A row contains a maximal  set of contiguous numerical values that appear in sorted order in $\sigma$.
A \emph{block} of a permutation $\sigma$ is a contiguous subsequence of $\sigma$ containing consecutive values. For example, $543$ is a decreasing block in $\sigma = 6154372$. The intersection of a column and a row form a (possibly empty) increasing block. The \emph{chessboard representation} of a permutation $\sigma$ is a matrix $M$ where $M_{i,j}$ is the number of elements of $\sigma$ belonging to the intersection of the $i^\text{th}$ row and $j^\text{th}$ column. (This differs from a related definition of the chessboard representation by Miller~\cite{Mil-JCTSA-2009}, using descending rows rather than ascending rows, for which the intersection of a row and a column can contain at most one element.) As defined here, the chessboard representation respects the dominance relations in $\plot(\sigma)$. Indeed, the transformation from $\plot(\sigma)$ to $\board(\sigma)$ (shown in Figure~\ref{fig:perm-to-chessboard}) corresponds to a standard form of grid compaction used as a post-processing step in the construction of dominance drawings~\cite{GD-Book}.

\ifFull
\begin{figure}[h]
\else
\begin{figure*}
\fi
\centering
\includegraphics[width=0.75\textwidth]{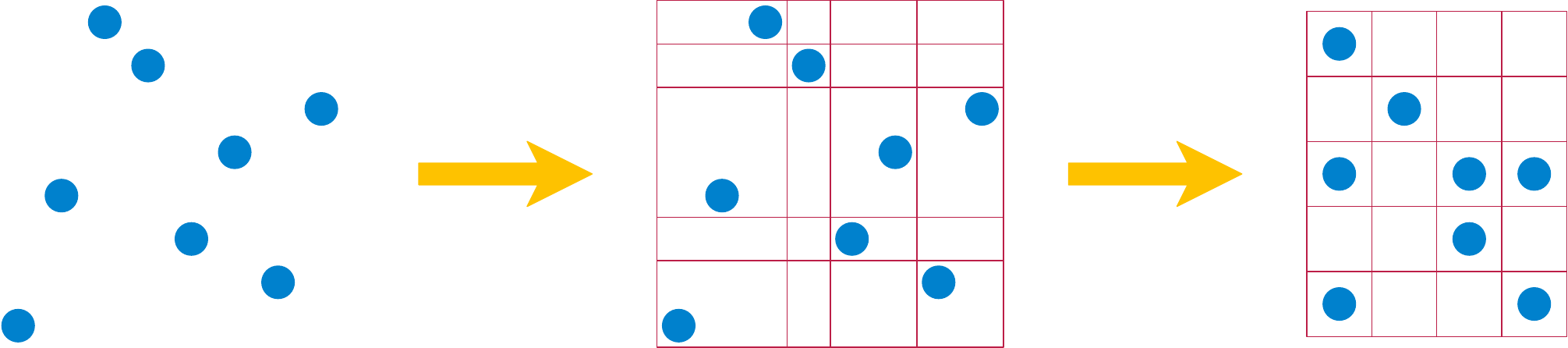}
\caption{The conversion from $\plot(\sigma)$ (left) to $\board(\sigma)$ (right, shown graphically by representing  0 as an empty square and 1 as a dot) via the partition of $\sigma$ into columns and rows (center) for $\sigma = 14 873526$.}
\label{fig:perm-to-chessboard}
\ifFull
\end{figure}
\else
\end{figure*}
\fi

\paragraph{Partial orders and order dimension.}
The study of non-planar dominance drawing leads us to the study of partial orders and their dimension. Given a set $A$, a \emph{partial order} on $A$ is a binary relation $\leq$ satisfying the following three properties:
\begin{itemize}\itemsep0pt
\item for all $a \in A$,  $a \leq a$ (reflexivity),
\item for all $a,b \in A$, $a \leq b$ and $b \leq a$ implies $a = b$ (antisymmetry) , and
\item for all $a,b,c \in A$, $a \leq b$ and $b \leq c$ implies $a \leq c$ (transitivity).
\end{itemize}
If, in addition to these properties either $a \leq b$ or $b \leq a$ for every $a,b \in A$, then the order is a \emph{total order} or \emph{linear order}. The \emph{intersection} of two partial orders on the same set $A$ is a binary relation consisting of the pairs that have the same relation in both given orders.
The \emph{width} of a partial order is the maximum cardinality of an \emph{antichain}, a set in which no two elements are comparable (related to each other by $\leq$); by Dilworth's theorem, the width equals the smallest number of \emph{chains} (totally ordered subsets) into which the elements can be partitioned~\cite{Dil-AM-50}.
The \emph{dimension} of a partial order $P$ is the least integer $k$ such that $P$ can be described as the intersection of $k$ total orders~\cite{DusMil-AJM-41}; it is never greater than $P$'s width~\cite{Hir-SRK-1951}.

Every partial order can be described uniquely by a transitively closed directed acyclic graph in which the vertices correspond to the elements of the partial order,
and there is an edge from $u$ to $v$ if and only if $u\leq v$ and $u\ne v$.
Alternatively, every partial order can be described uniquely by a transitively reduced directed acyclic graph,
its \emph{covering graph}, on the same vertex set.
In the covering graph, there is an edge from $u$ to $v$ if and only if $u\leq v$, $u\ne v$, and there does not exist $w$ with $u\leq w\leq v$, $u\ne w$, and $v\ne w$.
The partial order itself can be recovered as the reachability relation in either of these two graphs.
The \emph{Hasse diagram} of a partial order is a drawing of its covering graph. A transitively reduced directed acyclic graph has a (possibly nonplanar) dominance drawing if and only if its induced partial order has order dimension two: such a drawing may be obtained by finding two total orders whose intersection is the given partial order, using positions in one of these two total orders as the $x$-coordinates of the points, and using positions in the other total order as the $y$-coordinates of the points. Conversely, in a dominance drawing of a Hasse diagram, the sorted orderings of the points by their coordinates give two total orders whose intersection is the depicted partial order. In particular, because the partial orders of width two also have dimension at most two, their Hasse diagrams always have dominance drawings. The graphs represented by these drawings are planar, but in general not $st$-planar, and the drawings may have crossings.

\paragraph{Planar and outerplanar DAGs.}
A graph is a \emph{planar $st$-graph}, or more concisely \emph{$st$-planar}, if it is a directed acyclic graph that has a single source $s$ (a vertex with no incoming edges) and sink $t$ (a vertex with no outgoing edges), both of which belong to the outer face of some planar embedding of the graph. A planar graph with a single source and sink has a dominance drawing if and only if it is $st$-planar~\cite{GD-Book}. In an $st$-planar graph, the partial ordering of the vertices by reachability forms a \emph{lattice}: each pair of vertices has a unique join, the closest vertex that they can both reach, and a unique meet, the closest vertex that can reach both of them. More strongly, a transitively reduced graph is $st$-planar if and only if it is the Hasse diagram of a two-dimensional lattice~\cite{Pla-JCTB-1976}. Every two-dimensional partial order can be extended to a lattice, a fact that Eppstein and Simons used to find confluent drawings of Hasse diagrams~\cite{EppSim-GD-11}, and that implies that the permutations corresponding to dominance drawings of $st$-planar graphs have no forbidden patterns~\cite{BanCheDev-GD-2013}.

However, subclasses of the $st$-planar graphs may nevertheless have forbidden patterns. One in particular that we consider here is the class of $st$-outerplanar graphs. First considered by Chlebus \emph{et al.}~\cite{ChlChrDik-FCT-87}, these are outerplanar DAGs with a single source and sink. By outerplanarity, every vertex belongs to the outer face, so in particular these graphs are $st$-planar.

\section{Superpatterns}

\paragraph{Riffles.}

The \emph{riffle shuffle permutations} are the permutation that can be created by a single riffle shuffle of a deck of $n$ playing cards. In terms of permutation patterns, they may be described as the permutations that avoid $321$, $2143$ and $2413$\cite{Atk-DM-1999}. We also define the \emph{antiriffle} shuffle permutations to be the permutations whose inverse is a riffle shuffle permutation. Since pattern containment is preserved under taking inverses the inverse of a riffle shuffle superpattern is an antiriffle superpattern.

\begin{theorem}
The riffle (antirffle) shuffle permutations have superpatterns of size $2n-1$.
\end{theorem}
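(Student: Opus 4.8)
First I would recall the structural characterization of a riffle shuffle permutation of $[n]$: it is obtained by splitting the sorted sequence $1,2,\dots,n$ into two contiguous blocks (a "prefix" $1,\dots,k$ and a "suffix" $k+1,\dots,n$) and interleaving them, each block kept in increasing order. Equivalently, a riffle permutation of length at most $n$ is a merge of two increasing sequences of consecutive integers. So to build a superpattern it suffices to exhibit a single permutation $\sigma$ of size $2n-1$ that contains, as a pattern, every such interleaving of two increasing runs of total length $n$.

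The natural candidate is $\sigma = n,\,1,\,n+1,\,2,\,n+2,\,3,\,\dots,\,2n-1,\,n$ — more precisely, interleave the decreasing-offset run $1,2,\dots,n-1$ (on the small values) with the run $n,n+1,\dots,2n-1$ (on the large values), placing the large run in odd positions and the small run in even positions, or some variant of this "double-staircase." The key claim is that this $\sigma$ contains two long increasing subsequences that are "independent": one sitting on positions/values that let us realize the prefix block of an arbitrary riffle, the other realizing the suffix block, and moreover for any chosen split point $k$ and any chosen interleaving pattern we can select one element per target position so that the relative order matches. The plan is: given a target riffle $\pi$ of length $m \le n$ with split at position $k$, walk left to right through $\pi$; when $\pi$ currently wants an element of the "low" block, grab the next available point of $\sigma$ lying on the low increasing run; when it wants an element of the "high" block, grab the next available point on the high run. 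Because the two runs of $\sigma$ are interleaved finely enough (every low point is immediately followed by a high point of larger value, and vice versa), we never get stuck: at any moment the next low point and the next high point both lie to the right of everything chosen so far, and their value order is forced correctly by the construction. Verifying this "never stuck" invariant — i.e. that $2n-1$ positions really suffice to accommodate every interleaving for every split point simultaneously — is the crux; it amounts to a counting/greedy argument showing that consuming $m$ of the $2n-1$ points (at most $n-1$ from one run, at most $n$ from the other, or symmetric) always leaves the needed points available in the needed order.

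The main obstacle, then, is pinning down the exact construction so that the leading constant is $2$ and not, say, $3$ or $4$: a lazy interleaving or padding with extra "buffer" points would give $cn$ for larger $c$, and the content of the theorem is precisely that $c=2$ works (with a lower-order additive term). I expect the proof to hinge on the observation that a riffle of length $m$ uses the low block for some $k \le m$ of its entries and the high block for the remaining $m-k$, and that by aligning the two runs of $\sigma$ value-wise (low run on values $\{1,\dots,n-1\}$, high run on values $\{n,\dots,2n-1\}$) the split point of $\pi$ maps to a single cut in value space, after which the interleaving pattern of $\pi$ dictates a valid left-to-right selection in $\sigma$. So the steps are: (1) state the merge-of-two-increasing-consecutive-runs characterization of riffles; (2) define $\sigma$ of size $2n-1$ explicitly; (3) given arbitrary riffle $\pi$, describe the greedy embedding and prove the invariant that it never fails; (4) conclude, and note the antiriffle case follows by taking $\sigma^{-1}$, since patterns are preserved under inversion (as already observed in the paragraph preceding the theorem). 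Finally I would remark that the $\Theta(n)$ in the abstract's "$2n-1$" is exact here, matching the claimed bound, and that a matching lower bound (a superpattern for riffles must have size at least $2n-1$, or close to it) could be mentioned if available — though the statement as given only asserts the upper bound, so step (4) closes the proof.
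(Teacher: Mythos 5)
Your proposal takes essentially the same approach as the paper: the same perfect-shuffle ``double staircase'' $(n+1)\,1\,(n+2)\,2\cdots$ as the superpattern, the same characterization of riffles as interleavings of two increasing runs of consecutive values, and the antiriffle case by inversion. The paper's embedding is slightly simpler than your greedy one---it sends the $i$th element of the target riffle directly to the $i$th odd/even pair of positions according to which block it lies in, which makes the ``never get stuck'' invariant you identify as the crux immediate rather than something requiring a counting argument.
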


\begin{proof}
Consider the permutation
\if$\else\[\fi
\rho_n = (n+1)1(n+2)2(n+3)3 \cdots (2n)n
\if$\else\]\fi
, constructed by performing a perfect riffle shuffle on a deck of $2n$ cards. We claim that every riffle permutation of length $n$ is a pattern of $\rho_n$. To see this, let $\sigma$ be an arbitrary riffle shuffle permutation of length $n$. Then by definition, $\sigma$ is formed by interleaving two sets $123\cdots (k-1)$ and $k\cdots n$ (where the first set may be empty). Call the first set the lower set and the second set the upper set. To embed $\sigma$ into $\rho_n$ we map $\sigma_i$ to $(\rho_n)_{2i}$ if $\sigma_i$ is in the upper set and $(\rho_n)_{2i-1}$ if $\sigma_i$ is in the lower set -- see Figure~\ref{fig:riffle-superpattern-example}.

The length of the superpattern may be reduced from $2n$ to $2n-1$ by the observation that the final element in $\rho_n$ does not need to be used, as the lower set can share a column with the last point in the upper set.
\end{proof}

\ifFull
\begin{figure}[h]
\else
\begin{figure*}
\fi
\centering
\includegraphics[height=0.75in]{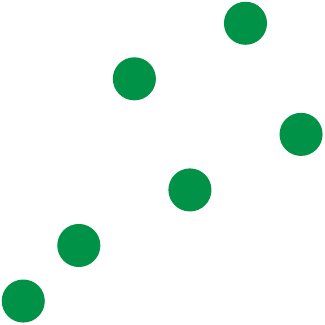}\hspace{4em}
\includegraphics[height=1.2in]{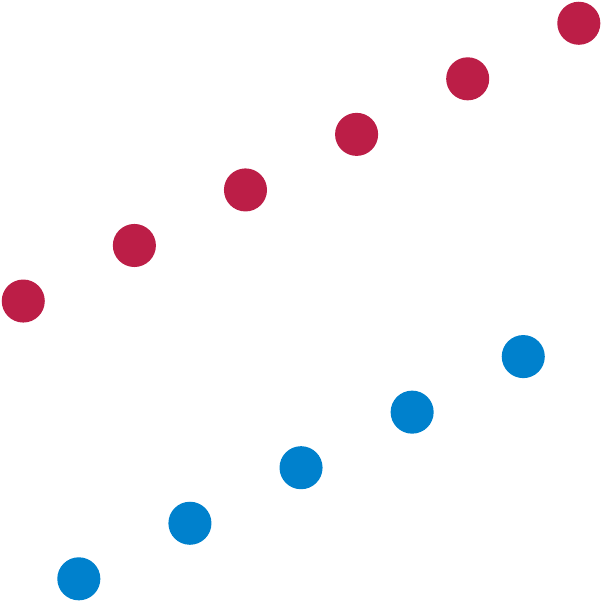}\hspace{4em}
\includegraphics[height=1.2in]{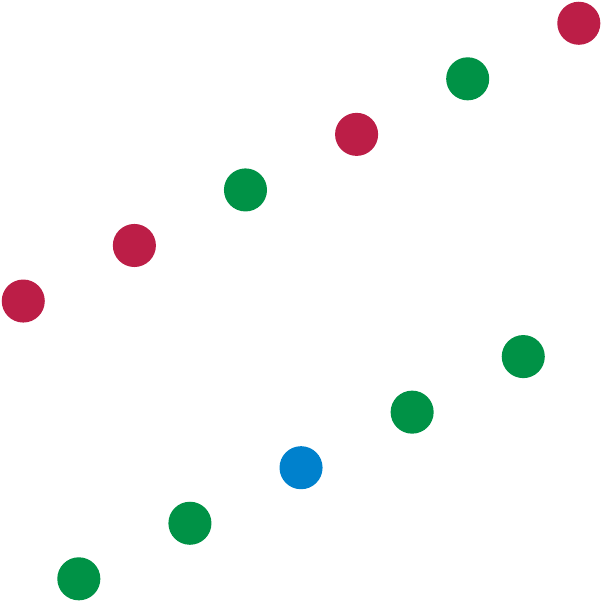}
\caption{Example of a length six riffle shuffle permutation (left), the superpattern $\rho_6$ (center), and an embedding of the permutation into the superpattern (right). }
\label{fig:riffle-superpattern-example}
\ifFull
\end{figure}
\else
\end{figure*}
\fi

\paragraph{321-avoiding permutations.} The $321$-avoiding permutations are precisely the \emph{$2$-increasing} permutations, meaning that we can partition them into two increasing subsequences, which we may choose so that one, the \emph{upper subsequence}, is above and to the left of the other, the \emph{lower subsequence}, in the plot of the permutation, and so that the upper sequence is maximal with this property.
In this section we construct $S_n(321)$-superpatterns of size $O(n^{3/2})$.   We begin our construction with an embedding into a $2n$ by $2n$ grid.

\begin{lemma}\label{lem:start-embed}
Let $\sigma \in S_n(321)$. Then $\sigma$ can be embedded in the bottom right triangle of a $2n$ by $2n$ grid with the upper subsequence of $\sigma$ on the diagonal.
\end{lemma}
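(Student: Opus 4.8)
The plan is to exploit the $2$-increasing structure of $\sigma$ directly. Since $\sigma$ avoids $321$, partition its plot into the upper subsequence $U$ (above-left, chosen maximal) and the lower subsequence $L$ (below-right), as described just before the lemma. Both $U$ and $L$ are increasing. My first move is to place the points of $U$ on the main diagonal of the $2n \times 2n$ grid: if $U = u_1 < u_2 < \cdots < u_m$ read left to right, assign $u_j$ the grid coordinates $(2j-1, 2j-1)$, say, spacing them out so that there is room between consecutive diagonal points. This uses only odd-indexed rows and columns among the first $2m-1$, and since $m \le n$ it fits inside the $2n \times 2n$ box; moreover each $u_j$ lies on the diagonal as required.

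Next I would place the lower subsequence $L$. Each element $\ell$ of $L$ sits, in the original plot of $\sigma$, below-and-to-the-right of some portion of $U$: specifically there is a well-defined number $a(\ell)$ of elements of $U$ that precede $\ell$ in position, and a well-defined number $b(\ell)$ of elements of $U$ whose value is below $\ell$'s value. Because $U$ is above-left of $L$, and because $L$ is itself increasing, the pairs $(a(\ell), b(\ell))$ are (weakly) monotone as $\ell$ ranges over $L$ in order. I want to route $L$ so that every point of $L$ lands strictly below the diagonal (the "bottom right triangle") while preserving all pairwise order relations with the already-placed points of $U$ and among the points of $L$ themselves. Concretely: give $\ell$ the $x$-coordinate $2\,a(\ell)$ (an even column, so strictly between the diagonal positions of $u_{a(\ell)}$ and $u_{a(\ell)+1}$) and the $y$-coordinate $2\,b(\ell)$ — no, I need to be more careful so distinct elements of $L$ with the same $a$ or $b$ value don't collide and don't violate order; the honest version is to interleave the even columns/rows so that consecutive elements of $L$ that share an $a$-value get consecutive even columns, and similarly for rows, shifting the diagonal points of $U$ slightly rightward/upward if necessary to make room. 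Since $|U| + |L| = n$ and each element consumes a bounded number of new rows and columns, the total stays within $2n$. One then checks that for $\ell$ with $a$ elements of $U$ before it, $x_\ell$ exceeds the $x$-coordinates of $u_1, \dots, u_a$ and is less than those of $u_{a+1}, \dots$; symmetrically for $y$; and that $x_\ell \ge y_\ell$ is impossible to fail because $b(\ell) \le a(\ell)$ always (an element of $U$ below $\ell$ must, by the above-left placement of $U$, also be left of $\ell$) — wait, that last inequality is exactly what forces $L$ into the lower triangle, so I would verify it carefully.

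The main obstacle I anticipate is the bookkeeping of collisions and the verification that the grid bound $2n$ is not exceeded: we have $n$ points total but each may want its own row and column, and the diagonal placement of $U$ already claims $2m-1$ of the $2n$ lines, so the slack is tight and one must argue that the even-indexed lines suffice for $L$ without spilling past $2n$. A clean way to handle this is to process the elements of $\sigma$ left to right in their original order, maintaining the invariant that after placing the first $i$ elements we have used at most $2i$ rows and $2i$ columns, with all elements of $U$ seen so far on the diagonal and all elements of $L$ seen so far strictly below it; each new element adds at most two new lines of each kind (one to host it, and at most one to preserve separation), and the maximality of the chosen $U$ guarantees the structural inequality $b(\ell) \le a(\ell)$ that keeps $L$ below the diagonal. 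The rest is routine case analysis on whether the new element joins $U$ or $L$.
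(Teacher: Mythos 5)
Your proposal does not close the lemma; the essential step --- the $2n$ bound --- is asserted but not established, and the justification you offer for it is wrong. The first, concrete placement (put $u_j$ at $(2j-1,2j-1)$ and $\ell$ at $(2a(\ell),2b(\ell))$) fails immediately, e.g.\ for $\sigma = n\,1\,2\cdots(n-1)$, where $U=\{n\}$ and every element of $L$ has $a(\ell)=1$, so all of $L$ collides in one column; you retract this yourself, so what remains is the left-to-right insertion plan. There the accounting rests on the claim that ``each new element adds at most two new lines of each kind,'' and that claim is false: whenever a run of $k$ elements of $L$ is followed by an element of $U$, the column count has run ahead of the row count (each $L$-insertion consumes a fresh rightmost column \emph{plus} a compensating column shift, but only one new row), and the next $U$-element must land back on the diagonal, forcing it to open up to $k+1$ new rows in a single step. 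The invariant ``at most $2i$ rows and $2i$ columns after $i$ elements'' is in fact correct, but it survives only by an amortized case analysis on how the row and column counts evolve, not by a per-element bound; you also never specify the mechanism that keeps $U$ on the diagonal when a row is inserted for an element of $L$ (every such insertion lifts the $U$-elements above it off the diagonal and must be coupled with a rightward shift of exactly their columns, and one must check that no previously placed $L$-element is thereby pushed onto the diagonal --- true, since all earlier $L$-elements lie below the insertion row, but this is precisely the ``routine case analysis'' you defer and it is where the lemma lives).

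For contrast, the paper avoids all of this insertion bookkeeping by starting from $\plot(\sigma)$ in the $n\times n$ grid, where every order relation and coordinate distinctness already holds, and only shifting the upper subsequence onto the diagonal: a point of $U$ below the diagonal is moved up together with everything above it, a point above the diagonal is moved right together with everything to its right. After these shifts the gap between consecutive diagonal points equals their $L^\infty$-distance in $\plot(\sigma)$, and the sum of those distances along the monotone staircase from $(0,0)$ to $(n,n)$ telescopes to at most $2n$, which gives the grid bound in one line. If you want to salvage your greedy construction, you must replace the per-element claim with the amortized argument sketched above; otherwise I would recommend switching to the shift-from-the-plot argument.
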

\begin{proof}
Start with $\plot(\sigma)$ in an $n$ by $n$ grid.  Consider the points of the upper subsequence in left to right order.  For each of these points that is below the diagonal, shift it and all the points above it upward by the same amount, so that its shifted location lies on the diagonal. Similarly, for each point of the upper subsequence that is above the diagonal shift it and all the points to the right of it rightward, so that its shifted position lies on the diagonal.  After this shifting the spacing between any two successive points on the diagonal is equal to the $L^\infty$-distance between the two points in $\plot(\sigma)$.  The sum of these distances in $\plot(\sigma)$, including implicit starting and ending points at $(0,0)$ and $(n,n)$, cannot be more than $2n$.  So, after the shifting process is complete, the points are contained in a $2n$ by $2n$ grid.
\end{proof}

We define our superpattern $\mu_n$ by specifying its $6n + 8\csqrtn$ by $6n + 8\csqrtn$ chessboard representation $M_n$. The entries of $M_n$ are all zero except:
\begin{enumerate}
\item $[M_n]_{i,j} = 1$ if $i - j \leq 2\csqrtn + 1$ for $1 \leq i,j \leq 6n  + 8\csqrtn$ (red band);
\item $[M_n]_{i,j} = 1$ if $j - i = k \csqrtn$ for $0 \leq k \leq \left\lfloor \frac{6n  + 8\csqrtn}{\csqrtn}\right\rfloor$ for $1 \leq i,j \leq 6n + 8\csqrtn$ (blue lines).
\end{enumerate}
See Figure~\ref{fig:321-superpat-ex} for the high level structure. The red band in $M_n$ is made up of those entries below the diagonal whose $L^\infty$-distance to the diagonal is at most $2\csqrtn + 1$, and the blue lines are made up of those entries above the diagonal whose $L^\infty$-distance to the diagonal is a multiple of $\csqrtn$ and the diagonal itself.

%The embedding strategy will be to start with the embedding from Lemma~\ref{lem:start-embed} and shift points such that the upper sequence will always stay on pieces of blue lines and the lower sequence will be fit into the red band.

\begin{lemma}
\label{lem:321-length}
The permutation $\mu_n$ is of length $30n^{3/2} + \Theta(n)$.
\end{lemma}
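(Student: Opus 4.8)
The plan is to compute $\abs{\mu_n}$ directly from its chessboard representation. Since every element of a permutation lies in exactly one row and exactly one column, the sum of all entries of $\board(\mu_n) = M_n$ equals $\abs{\mu_n}$; and since every entry of $M_n$ is $0$ or $1$, this sum is just the number of cells that are red, blue, or both, counted once each. Write $N = 6n + 8\csqrtn$ for the side length of the board and recall that $\csqrtn = \sqrt{n} + O(1)$, so $N = 6n + \Theta(\sqrt n)$. I would count the red cells, the blue cells, and their overlap separately.

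For the red band I take the cells with $0 \le i - j \le 2\csqrtn + 1$, a union of $2\csqrtn + 2$ diagonals of $M_n$; the diagonal $i - j = d$ meets the $N \times N$ board in $N - d$ cells, so the red band has $\sum_{d=0}^{2\csqrtn+1}(N-d) = (2\csqrtn+2)N - \binom{2\csqrtn+2}{2}$ cells. Expanding with $\csqrtn = \sqrt n + O(1)$ and $N = 6n + \Theta(\sqrt n)$, the leading term is $2\sqrt n \cdot 6n = 12n^{3/2}$ and all remaining terms are $O(n)$, with the $n$-coefficient a positive constant; hence the red band contributes $12n^{3/2} + \Theta(n)$.

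For the blue lines I take the diagonals $j - i = k\csqrtn$ with $0 \le k \le k_{\max} := \lfloor N/\csqrtn\rfloor$, where $k_{\max} = \lfloor 6n/\csqrtn + 8\rfloor = 6\sqrt n + \Theta(1)$. The diagonal $j - i = c$ meets the board in $\max(N-c,0)$ cells, so the blue lines have $\sum_{k=0}^{k_{\max}}(N - k\csqrtn) = (k_{\max}+1)N - \csqrtn\binom{k_{\max}+1}{2}$ cells. Here $(k_{\max}+1)N = (6\sqrt n + \Theta(1))(6n + \Theta(\sqrt n)) = 36n^{3/2} + \Theta(n)$, while $\csqrtn\binom{k_{\max}+1}{2} = (\sqrt n + \Theta(1))(18n + \Theta(\sqrt n)) = 18n^{3/2} + \Theta(n)$; subtracting, the blue lines contribute $18n^{3/2} + \Theta(n)$.

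Finally, a cell is counted by both descriptions only if it lies on the main diagonal $i = j$ — every other blue diagonal is strictly above the diagonal, and the entire red band lies on or below it — and there are only $N = \Theta(n)$ such cells. Adding up, $\abs{\mu_n} = (12n^{3/2} + \Theta(n)) + (18n^{3/2} + \Theta(n)) - \Theta(n) = 30n^{3/2} + \Theta(n)$. There is no substantive obstacle here beyond bookkeeping; the only points requiring care are verifying that the lower-order corrections are genuinely $\Theta(n)$ rather than smaller (which holds because the $\csqrtn^2$ and $n$ terms appearing in the expansions above carry positive constant coefficients) and remembering to subtract the main-diagonal overlap, which in any event only influences the $\Theta(n)$ term.
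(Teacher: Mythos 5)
Your proposal is correct and follows essentially the same route as the paper: count the nonzero entries of $M_n$ by summing the lengths of the red-band diagonals and of the blue lines as two arithmetic series, obtaining $12n^{3/2}+\Theta(n)$ and $18n^{3/2}+\Theta(n)$ respectively. The only cosmetic difference is that you place the main diagonal in both counts and subtract the overlap, whereas the paper assigns it only to the blue lines; this affects nothing beyond the $\Theta(n)$ term.
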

\begin{proof}
The length of $\mu_n$ is equal to the number of nonzero entries in $M_n$. The number of entries in the red band and blue lines are given respectively by the two arithmetic series
\[
\sum_{i=1}^{2\csqrtn + 1} 6n  + 8\csqrtn - i = 12n^{3/2} +\Theta(n)
\]
and
\[
\sum_{i = 0}^{\left\lfloor \frac{6n  + 8\csqrtn}{\csqrtn}\right\rfloor} 6n + 8\csqrtn - \csqrtn i = 18n^{3/2} + \Theta(n).
\]
Therefore, the length of $\mu_n$ is $30n^{3/2} + \Theta(n)$.
\end{proof}

\begin{figure}
\centering
\includegraphics{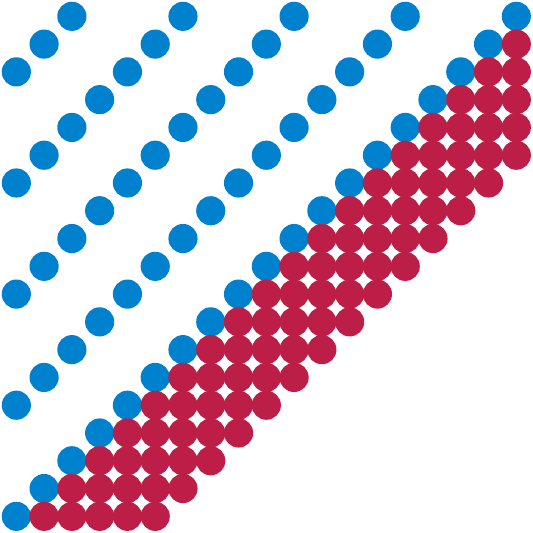}
\caption{Structure of the 321-avoiding superpattern}
\label{fig:321-superpat-ex}
\end{figure}

\begin{lemma}\label{lem:sum-analysis}
Let $d_i$ be a sequence such that $\sum_i d_i \leq cn$ with $d_i \geq \frac{\csqrtn}{2}$ for all i.  Then $\sum_i \left\lceil \frac{d_i}{\csqrtn} \right \rceil \leq 2c\csqrtn.$
\end{lemma}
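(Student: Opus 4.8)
The plan is to bound each ceiling term $\lceil d_i/\csqrtn \rceil$ linearly in $d_i$, using the hypothesis that each $d_i$ is not too small. The key observation is that for any positive real $x$ with $x \geq \csqrtn/2$, we have $\lceil x/\csqrtn \rceil \leq 2x/\csqrtn$. Indeed, $\lceil x/\csqrtn \rceil < x/\csqrtn + 1 \leq x/\csqrtn + x/\csqrtn = 2x/\csqrtn$, where the middle inequality uses $1 \leq x/(\csqrtn/2) \cdot (1/2) \cdot 2 = x/\csqrtn \cdot \ldots$—more precisely, $x \geq \csqrtn/2$ gives $1 \leq 2x/\csqrtn$. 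So $\lceil d_i/\csqrtn \rceil \leq 2d_i/\csqrtn$ for every $i$.

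Summing this bound over all $i$ and pulling the constant out of the sum gives
\[
\sum_i \left\lceil \frac{d_i}{\csqrtn} \right\rceil \;\leq\; \sum_i \frac{2 d_i}{\csqrtn} \;=\; \frac{2}{\csqrtn}\sum_i d_i \;\leq\; \frac{2}{\csqrtn}\cdot cn \;=\; 2c\,\frac{n}{\csqrtn} \;\leq\; 2c\csqrtn,
\]
where the last step uses $n/\csqrtn \leq \csqrtn$, which holds since $\csqrtn \geq \sqrt{n}$ implies $\csqrtn^2 \geq n$. This completes the argument.

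There is no real obstacle here; the only point requiring a little care is justifying the per-term inequality $\lceil d_i/\csqrtn\rceil \le 2d_i/\csqrtn$, which is exactly where the hypothesis $d_i \geq \csqrtn/2$ is used—without a lower bound on $d_i$ the ceiling could be as large as $1$ while $d_i$ is arbitrarily small, and the linear bound would fail. One should also note that the hypothesis $\sum_i d_i \le cn$ together with $d_i \ge \csqrtn/2$ implicitly bounds the number of terms, but this is not needed explicitly since the estimate above already controls the whole sum.
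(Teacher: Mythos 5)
Your overall route is genuinely different from the paper's and, once one step is repaired, it is both correct and cleaner. The paper partitions the indices into a set $S$ of small terms ($d_i \le \csqrtn$) and a set $L$ of large terms, bounds the two partial sums separately, and closes with a counting relation between $|S|$ and $|L|$. You instead prove a single per-term bound $\left\lceil d_i/\csqrtn\right\rceil \le 2d_i/\csqrtn$, sum it, and finish with $n/\csqrtn \le \csqrtn$. That per-term inequality is true for every $d_i \ge \csqrtn/2$ and immediately yields the lemma, so your structure is sound and arguably preferable to the paper's bookkeeping.

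However, the justification you give for the per-term inequality does not go through as written. The chain $\left\lceil x/\csqrtn\right\rceil < x/\csqrtn + 1 \le x/\csqrtn + x/\csqrtn$ requires $1 \le x/\csqrtn$, i.e.\ $x \ge \csqrtn$; the hypothesis only gives $x \ge \csqrtn/2$, which yields $1 \le 2x/\csqrtn$ and hence only the weaker conclusion $\left\lceil x/\csqrtn\right\rceil \le 3x/\csqrtn$. In the problematic range $\csqrtn/2 \le x < \csqrtn$ the claimed bound is nonetheless true, just for a different reason: there $\left\lceil x/\csqrtn\right\rceil = 1$ while $2x/\csqrtn \ge 1$. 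So the fix is a two-case argument (equivalently, prove $\lceil y\rceil \le 2y$ for all $y \ge 1/2$ by splitting at $y=1$): for $\csqrtn/2 \le x \le \csqrtn$ the ceiling is $1 \le 2x/\csqrtn$; for $x > \csqrtn$ your chain applies verbatim. With that repair the proof is complete.
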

\begin{proof}
Let $S$ be the set $\left\{ i \mid d_i \leq \csqrtn\right\}$ and $L$ be the set $\left\{ i \mid d_i > \csqrtn \right\}$. Then, partitioning the sum $\sum_i d_i$ into two subsequences and then applying the assumption of the lemma that each term is sufficiently large,
\[
c n\geq \sum_i d_i  = \sum_{i\in S} d_i + \sum_{i \in L} d_i
   \geq \frac{|S|\csqrtn}{2} + \sum_{i \in L} d_i,\]
or equivalently,
\ifFull$\else\[\fi
\sum_{i \in L} d_i \leq cn - \frac{|S|\csqrtn}{2}.
\ifFull$\else\]\fi
Plugging this bound into the sum from the conclusion of the lemma, and using the facts that
each term of this sum for an element of $S$ rounds up to one and that each term for an element of $L$ is rounded up by at most one unit, we get
\ifFull
\[
\sum_i \left\lceil \frac{d_i}{\csqrtn} \right \rceil \leq |S| + |L| + \frac{\sum_{i\in L} d_i}{\csqrtn}
   \leq |S| + |L| + \frac{cn - \frac{|S|\csqrtn}{2}}{\csqrtn}
   \leq \frac{|S|}{2} + |L| + c\csqrtn.
\]
\else
\begin{align*}
\sum_i \left\lceil \frac{d_i}{\csqrtn} \right \rceil
&\leq |S| + |L| + \frac{\sum_{i\in L} d_i}{\csqrtn}\\
&\leq |S| + |L| + \frac{cn - \frac{|S|\csqrtn}{2}}{\csqrtn}\\
&\leq \frac{|S|}{2} + |L| + c\csqrtn.
\end{align*}
\fi
Now if $|L| = c\csqrtn - k$ for some $k$, then $|S| \leq 2k$, for otherwise (again using the lower bound on the values in $S$ in the statement of the lemma) $\sum d_i$ would be larger than $cn$.  Therefore, we can simplify the right hand side of the above inequality, giving
$$\sum_i \left\lceil \frac{d_i}{\csqrtn} \right \rceil  \leq 2c\csqrtn$$
as required.
\end{proof}

\begin{theorem}\label{thm:321-superpat}
The permutation $\mu_n$ is a $S_n(321)$-superpattern.
\end{theorem}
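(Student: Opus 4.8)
The goal is to embed an arbitrary $\sigma \in S_n(321)$ as a pattern of $\mu_n$: we must choose, for each of the $\abs{\sigma}$ points of $\plot(\sigma)$, a distinct nonzero cell of $M_n$ so that the rows and columns of the chosen cells reproduce the relative order of the coordinates of the points. I would start from the normalized drawing supplied by Lemma~\ref{lem:start-embed}: $\plot(\sigma)$ sits in a $2n\times 2n$ grid with the upper subsequence $U$ on the main diagonal and the lower subsequence $L$ in the triangle on the other side of it. From this point on the task is purely geometric --- route this normalized drawing into the union of the red band and the blue lines of $M_n$, preserving, coordinate by coordinate, the order of every pair of points.

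The routing sends $U$ through the red band and $L$ through the blue lines. Since $U$ already lies on the diagonal after normalization, it naturally fits into the $\Theta(\csqrtn)$-wide dense region hugging the diagonal that the red band provides; the point of having a band of positive width, rather than a single diagonal, is that it gives room to thread the elements of $U$ in between the elements of $L$ in the correct order. The blue lines, being diagonals at offsets $0,\csqrtn,2\csqrtn,\dots$ from the main diagonal, are what let $L$ reach whatever depth it needs: I would cut $L$'s trajectory at the elements of $U$ that interleave it and place each resulting piece of $L$ on the blue line whose offset matches that piece's distance below the diagonal, rounded to a multiple of $\csqrtn$, so that successive pieces move monotonically among the blue lines. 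Because $\sigma$ is $321$-avoiding, $U$ and $L$ are separated by a monotone staircase in $\plot(\sigma)$, which in particular forbids any element of $U$ from lying both after and below an element of $L$; this structural fact is exactly what makes it possible to snap the pieces of $L$ onto distinct blue lines --- and to pad short pieces, see below --- without creating or destroying any order relation, whether inside $U$, inside $L$, or between the two, while the relations inside each of $U$ and $L$ are immediate from monotonicity.

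The final and most delicate part is the resource accounting, and this is where Lemma~\ref{lem:sum-analysis} together with the constant $6$ and the additive term $8\csqrtn$ in the definition of $M_n$ enter. After normalization the gaps between consecutive elements of $U$ on the diagonal are exactly the $L^\infty$-distances between those points in $\plot(\sigma)$, so they sum to at most $2n$; each intervening piece of $L$ descends to a depth bounded by the surrounding gap, so the number of blue lines that piece spans is the ceiling of that gap divided by $\csqrtn$. Padding the gaps shorter than $\csqrtn/2$ up to that length and applying Lemma~\ref{lem:sum-analysis} then bounds the total number of blue-line crossings by $O(\csqrtn)$ and the horizontal room consumed in hopping between blue lines by $O(n)$; one checks that $M_n$ really does supply what is required --- roughly $6\csqrtn$ blue lines, a red band of width $2\csqrtn+1$, and a $(6n+8\csqrtn)\times(6n+8\csqrtn)$ grid --- the constants having been chosen so that the worst case still fits. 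I expect this accounting, rather than the order-preservation check (which is essentially forced once the routing is fixed), to be the main obstacle: one must recast the geometry of $L$'s descent as a sequence satisfying the hypotheses of Lemma~\ref{lem:sum-analysis}, in particular handling pieces shorter than $\csqrtn/2$, while verifying that the resulting drawing simultaneously respects the blue-line count, the red-band width, and the grid size.
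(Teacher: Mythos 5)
Your routing is inverted relative to the actual geometry of $M_n$, and the inversion is not cosmetic. In $M_n$ the red band is the dense region of width $2\csqrtn+1$ on the lower side of the diagonal, and the blue lines are the sparse diagonals at offsets $0,\csqrtn,2\csqrtn,\dots$ on the upper side. The paper's proof keeps the \emph{upper} subsequence on the blue lines (it starts on the main diagonal by Lemma~\ref{lem:start-embed} and is only ever displaced by exact multiples of $\csqrtn$, so it stays on blue lines) and works to bring each point of the \emph{lower} subsequence into the red band. You do the opposite: $U$ into the band, $L$ onto the lines. As described this cannot be realized in $M_n$ at all --- there are no blue lines below the diagonal for $L$ to descend onto --- and even in a mirrored construction the assignment is backwards: the subsequence that can live on the sparse, quantized lines is the one whose displacements you can force to be exact multiples of $\csqrtn$, and that is the one starting on the diagonal, namely $U$. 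The lower subsequence must sit at essentially arbitrary offsets relative to the $U$ points, which is precisely why it needs the dense band.

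The more serious gap is the order-preservation mechanism. You cut $L$ into pieces and ``snap'' each piece to the blue line whose offset is its depth rounded to a multiple of $\csqrtn$, asserting that $321$-avoidance guarantees no order relation is created or destroyed. It does not: rounding moves an $L$ point by up to $\csqrtn$ in one coordinate, and nothing prevents a $U$ point from having a coordinate strictly between the old and new values (already in $2413$ the $L$ element of value $3$ sits one unit above the $U$ element of value $2$), in which case the snap flips an ascent into a descent. The paper avoids this entirely by never repositioning points independently: its only moves are prefix shifts --- move a chosen element together with every element above it (resp.\ to the right of it) upward (resp.\ rightward) by exactly $\csqrtn$ --- which preserve all pairwise relations automatically and, because the displacement is a multiple of $\csqrtn$, keep $U$ on blue lines as a byproduct. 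Your accounting sketch (gaps sum to $O(n)$, feed the rounded terms to Lemma~\ref{lem:sum-analysis}) is close in spirit to the paper's bound on the number of shifts, but without the shift mechanism the embedding itself is not established.
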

\begin{proof}
Let $\sigma$ be an arbitrary permutation in $S_n(321)$; we will show that $\mu_n$ contains $\sigma$ as a pattern by embedding $\sigma$ in $M_n$ while preserving the relative ordering of its elements.

To embed $\sigma$, we define two shift operations that alter an embedding $\sigma$ in the grid while still respecting the order relations of its points.  The first is an \emph{upwards shift} at an element $x$.  This shift moves $x$ and every element above it upwards by $\csqrtn$ units.  Similarly a \emph{rightwards shift} at $x$ moves $x$ and every element to the right of it rightwards by $\csqrtn$ units.

Apply Lemma~\ref{lem:start-embed} to embed $\sigma$ into the bottom left $2n$ by $2n$ points of a $6n + 8\csqrtn$ by $6n + 8\csqrtn$ grid, with the upper subsequence of $\sigma$ on a blue line of $M_n$, the main diagonal.  Next,  consider each point of the lower subsequence in left to right order. If one of these points is outside the red band, perform shift operations to move it into the red band.  If the point is below the red band by $i$ units,  perform $\left\lceil i/\csqrtn + 1/2 \right\rceil$ upwards shifts.  If the point is above the red band by $i$ units,  perform $\left\lceil i/\csqrtn + 1/2 \right\rceil$ rightwards shifts.  After these moves, the point in consideration is in the middle half of the red band and the relative ordering of the points remains the same. No element can be shifted rightwards more times than it was shifted upwards, so the elements of the upper subsequence will always remain on the blue lines during the shift operations.  After these shifts, the entire lower subsequence will lie within the red band. 

To complete the proof we need only show that after the specified moves all points remain in a $6n + 8\csqrtn$ by $6n + 8\csqrtn$ grid. Let $s_i$ be the sequence of indices of elements at which a shift takes place. Let $d_i$ be the $L_\infty$-distance between the Lemma~\ref{lem:start-embed} positions of $\sigma_{s_{i-1}}$ and $\sigma_{s_i}$. The sum of the $d_i$ is bounded above by $4n$, as Lemma~\ref{lem:start-embed} embeds the permutation in a $2n$ by $2n$ grid. Because a vertical shift at $s_i$ happens only when $d_i$ equals the horizontal displacement from $s_{i-1}$, and the sum of all of the horizontal displacements is $2n$, the sum of the $d_i$'s for the subsequence of elements causing vertical shifts is at most $2n$. Symmetrically, the sum of the $d_i$'s for the subsequence causing horizontal shifts is is at most $2n$. Also, each $d_i$ is at least $\csqrtn / 2$, as otherwise $\sigma_{s_i}$ would not have caused a shift. The number of shifts caused by $\sigma_{s_i}$ is at most $\lceil d_i / \csqrtn \rceil$.
If $H$ is the set of indices $i$ such that $\sigma_{s_i}$ causes a horizontal shift and $V$ is the set of indices $i$ such that $\sigma_{s_i}$ causes a vertical shift, then the total number of shifts can be bounded as
\ifFull
\[
\sum_{i \in H} \left\lceil \frac{d_i}{\csqrtn} \right\rceil  \leq 4\csqrtn\qquad\text{and}\qquad
\sum_{i \in V} \left\lceil \frac{d_i}{\csqrtn} \right\rceil  \leq 4\csqrtn,
\]
\else
\[
\sum_{i \in H} \left\lceil \frac{d_i}{\csqrtn} \right\rceil  \leq 4\csqrtn
\]
and
\[
\sum_{i \in V} \left\lceil \frac{d_i}{\csqrtn} \right\rceil  \leq 4\csqrtn,
\]
\fi
where the upper bounds follow from Lemma~\ref{lem:sum-analysis}.  Thus, elements are moved upwards and rightwards by at most $4\csqrtn \csqrtn \leq 4n + 8\csqrtn$ units, so all elements are within the $6n + 8\csqrtn$ by $6n + 8\csqrtn$ grid and $\sigma$ can be embedded into $M_n$.
\end{proof}

A close examination of Theorem~\ref{thm:321-superpat} will reveal that not every spot on the blue lines is used.  No point will ever be shifted upwards more than $2\csqrtn$ times, because no point starts further than $2n$ below the diagonal.  Therefore we only need the lowest $2\csqrtn$ blue lines.  Furthermore if in a particular column the bottom of the red band is $k > 0$ units away from the bottom of the grid, then no point in that column can be moved upwards by more than $k + \csqrtn/2$ units. The remaining usable portion of the blue lines is depicted in Figure~\ref{fig:constant-shave}.

\begin{theorem}\label{thm:321-constant}
There is a $S_n(321)$-superpattern of size $22n^{3/2} + \Theta(n)$.
\end{theorem}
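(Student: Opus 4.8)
The plan is to refine the construction of $\mu_n$ by trimming the portions of the blue lines (and, where possible, the red band) that are never touched by the embedding procedure of Theorem~\ref{thm:321-superpat}, and then to re-run the length count of Lemma~\ref{lem:321-length} on the trimmed chessboard. The two observations already noted in the paragraph preceding the statement give exactly the structural facts we need: first, since every point of $\plot(\sigma)$ from Lemma~\ref{lem:start-embed} starts at $L^\infty$-distance at most $2n$ below the diagonal, and each upwards shift moves a point up by $\csqrtn$ units, no point is ever shifted upwards more than $2\csqrtn$ times; hence only the lowest $2\csqrtn$ blue lines above the diagonal are ever needed. Second, in a column where the red band meets the bottom of the grid only after a gap of $k>0$ rows, no point in that column is ever pushed upward by more than $k+\csqrtn/2$ units, so the blue lines in that column are only needed up to that height. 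These two facts together cut the blue region from a full upper triangle of "height" $\approx 6n$ down to the much thinner region depicted in Figure~\ref{fig:constant-shave}.

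Concretely, I would first re-derive the shape of the red band: after the shifts, the lower subsequence of $\sigma$ lies in the middle half of a band of $L^\infty$-width $2\csqrtn+1$ about the diagonal, and since points only move up and right, the occupied part of the diagonal has length at most $2n+4n = 6n$ (the initial $2n$ plus the at most $4\csqrtn$ shifts of size $\csqrtn$ in each direction, as bounded via Lemma~\ref{lem:sum-analysis}). So the red band still contributes $\sum_{i=1}^{2\csqrtn+1}(6n+\Theta(\sqrt n)-i) = 12n^{3/2}+\Theta(n)$ — this part is unchanged. The savings come entirely from the blue lines. Instead of $\big\lfloor (6n+8\csqrtn)/\csqrtn\big\rfloor \approx 6\csqrtn$ blue lines each of length $\approx 6n$, we keep only the lowest $2\csqrtn$ of them, and moreover each such line is truncated on the left: blue line number $k$ (at $L^\infty$-distance $k\csqrtn$ from the diagonal) is only needed in columns where a point could actually be shifted that high, i.e. where the "room below the red band" $k$ plus $\csqrtn/2$ reaches at least $k\csqrtn$. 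I would package this as: the usable length of the $k$-th blue line is $\Theta(n) - \Theta(k\sqrt n)$ rather than $\Theta(n)$, so the blue contribution becomes $\sum_{k=1}^{2\csqrtn}\big(\Theta(n)-c\,k\csqrtn\big)$. Plugging in the leading constants carefully, the full-length term gives $2\csqrtn\cdot 6n = 12n^{3/2}$ but the subtracted arithmetic series $\sum_{k=1}^{2\csqrtn} k\csqrtn \approx \csqrtn\cdot 2n = 2n^{3/2}$ times the appropriate constant removes a $2n^{3/2}$ chunk, bringing the blue total down from $18n^{3/2}$ to $10n^{3/2}$, so that $|\mu_n| = 12n^{3/2}+10n^{3/2}+\Theta(n) = 22n^{3/2}+\Theta(n)$. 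I would then note that correctness is immediate: Theorem~\ref{thm:321-superpat} already establishes that the embedding procedure only ever uses the retained cells, so the trimmed chessboard still realizes every $\sigma\in S_n(321)$ as a pattern.

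I expect the main obstacle to be bookkeeping the leading constant rather than any conceptual difficulty: I must make the "usable portion of the $k$-th blue line" precise enough that the arithmetic series sums to exactly the claimed $22n^{3/2}$ and not merely $O(n^{3/2})$. In particular I need to verify that the region in Figure~\ref{fig:constant-shave} is described by an inequality of the form "blue cell $(i,j)$ with $j-i=k\csqrtn$ is kept iff $i \le (\text{distance from bottom of red band to bottom of grid in column } j) + \csqrtn/2$" and then express the column-by-column red-band geometry (the red band is $\{(i,j): i-j \le 2\csqrtn+1\}$ intersected with the $(6n+8\csqrtn)^2$ grid, so the "room below" in column $j$ is roughly $j-2\csqrtn-1$ for small $j$ and $0$ once $j$ is large), sum the resulting trapezoidal blue region, and confirm the $n^{3/2}$ coefficient. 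A secondary point to check is that trimming does not accidentally remove a diagonal cell or a red-band cell that some point passes through; but since upward and rightward shifts never move a point below the diagonal or outside the red band's middle half, and the kept blue lines are exactly $k\csqrtn$ for $0 \le k \le 2\csqrtn$ including $k=0$ (the diagonal), no needed cell is lost. Once the region is pinned down, the theorem follows by the same two-arithmetic-series computation as Lemma~\ref{lem:321-length} applied to the trimmed matrix.
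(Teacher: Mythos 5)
Your proposal follows essentially the same route as the paper: it keeps the red band's $12n^{3/2}$ contribution unchanged, uses the same two observations (only the lowest $2\csqrtn$ blue lines are needed, and each is truncated according to the room below the red band in its column) to trim the blue region, and recomputes the arithmetic series to get a blue contribution of $10n^{3/2}+O(n)$, matching the paper's sum $\sum_{i=0}^{2\csqrtn}\bigl(6n+(8-i)\csqrtn\bigr)$. The bookkeeping you flag as the main obstacle is exactly what the paper's one-line computation carries out, so the argument is correct and essentially identical.
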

\begin{proof}
If the points on the blue lines that cannot be used are removed from the superpattern, the total number of points on the blue lines that remain becomes less than or equal to
\ifFull
\[
\sum_{i = 0}^{2\csqrtn} 6n + (8 - i)\csqrtn = 10n^{3/2} + O(n),
\]
\else
\begin{multline*}
\sum_{i = 0}^{2\csqrtn} 6n + (8 - i)\csqrtn = 10n^{3/2} + O(n),
\end{multline*}
\fi
from which it follows that  the total size of the reduced superpattern is $22n^{3/2} + O(n)$.
\end{proof}

\begin{figure}[t]
\centering
\includegraphics[width=3in]{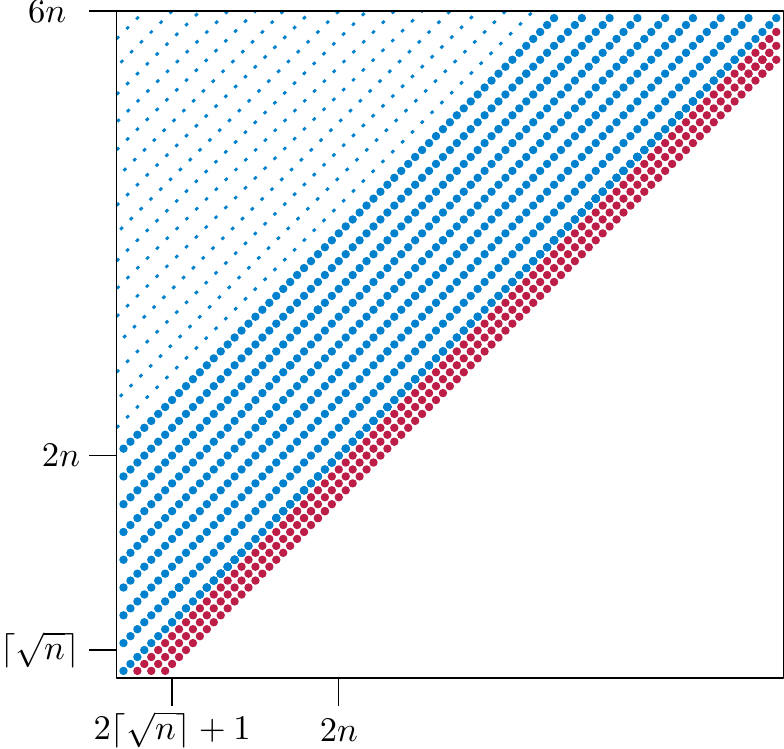}\qquad
\caption{The revised superpattern with removed pieces indicated by dotted lines.  Spacing and endpoints are only approximate.}
\label{fig:constant-shave}
\end{figure}

\section{Dominance drawing}
\subsection{$st$-outerplanar graphs}
Dominance drawings of $st$-outerplanar graphs induce permutations that avoid $321$, so by Theorem~\ref{thm:321-superpat} they have universal point sets of size $22n^{3/2} + \Theta(n)$. In this section we will improve on this construction and construct universal point sets of size $32n\log n + \Theta(n)$ for these drawings.

An $st$-outerplanar graph consists of two directed paths from $s$ to $t$ bordering the outer face, and edges  in the interior. Without loss of generality we may assume that the graph is transitively reduced, and the interior edges all connect one path to the other.
 If we draw the directed paths as parallel horizontal lines directed left to right, then we can classify the interior edges as upward or downward. In Figure~\ref{fig:st-out-struct} the contiguous regions of upward edges are highlighted in red, and the contiguous regions of downward edges are highlighted in blue. In addition to the blue and red regions we have green regions connecting the bottom row of a red region to the bottom row of the next blue region and yellow regions connecting the top row of a blue region to the top row of the next red region. The ``source paths'' into a red or blue region may be of any length including length zero (in which case the red and blue regions share a vertex), as shown at positions $B$ and $D$. The green and yellow paths need not be present, as shown in regions $C$ and $D$. However, if a green or yellow path is present it must have at least two edges, as otherwise the graph fails to be transitively reduced.

\paragraph{Computing the decomposition.} 
To compute this decomposition into colored regions we start with an outerplanar embedding of the given $st$-outerplanar graph. The outer face is bordered by two paths from $s$ to $t$, which might not be disjoint from each other. Among all possible embeddings, we choose one in which the first edge from one path to the other goes from the lower path to the higher path, and so that the orientation of the edges from one path to the other changes at each vertex shared by both paths. As in the figure, we use these two paths to partition the vertices into upper and lower \emph{rows}; vertices that are shared by both paths are placed only in one of the two rows.  A shared vertex with one incoming edge is placed on the same row as its predecessor; a shared vertex with two incoming edges (necessarily one from each row) is placed in such a way that the incoming edge from a different row has the same orientation as the preceding path-to-path edges. The vertices of a single row form a sequence of one or more contiguous paths; e.g. in the figure, the bottom row has two paths separated by a gap at $C$, while the top row has a gap at $D$.

\ifFull
\begin{figure}[t]
\else
\begin{figure*}
\fi
\centering
\includegraphics[width=5in]{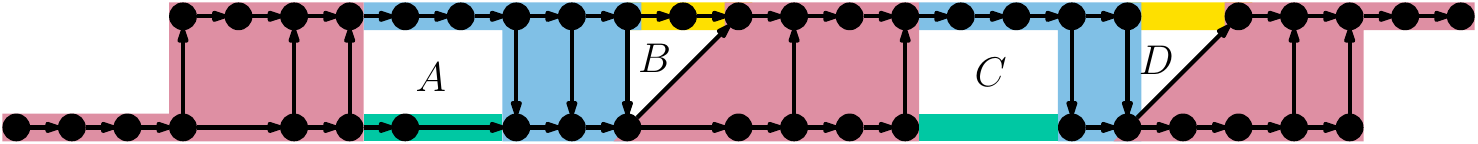}
\caption{Structure of an $st$-outerplanar graph.}
\label{fig:st-out-struct}
\ifFull
\end{figure}
\else
\end{figure*}
\fi

Next, we partition the graph into colored regions, as shown in the figure. To do so we iterate through the bottom row until we find a down arrow, a vertex of indegree two. The path between the last up arrow $e$ and this down arrow $f$ is classified as a green region. The vertices in the top row up to and including the destination of $e$ together with the vertices in the bottom row up to and including the source of $e$ form a red region. Once colored, these vertices are removed and we repeat the process this time iterating in a symmetric way through the top row to form a blue and yellow region. We continue in this way until all vertices have been colored.

\paragraph{Reachability properties of the decomposition.}
Consider the left-to-right sequences $R_i$ of red regions, $B_i$ of blue regions, $G_i$ of green regions, and $Y_i$ yellow regions. The union $R_i \cup B_i \cup G_i \cup Y_i$ will be referred to as the $i^\text{th}$ block. The following reachability facts then hold:
\begin{enumerate}
\item $R_i$ can reach all points in $B_i$ and $Y_i$, the bottom row of $R_i$ can reach all of $G_i$, and the top row can reach none of $G_i$;
\item $G_i$ can reach all points in the bottom row of $B_i$ and no points in $Y_i$, $R_i$ or the top row of $B_i$;
\item $B_i$ can reach no points in $R_i$ or $G_i$, the top row of $B_i$ can reach all points of $Y_i$, and the bottom row of $B_i$ can reach none of $Y_i$;
\item $Y_i$ can reach no points in $R_i$, $B_i$ or $G_i$;
\item $B_i$ can reach all points in $R_{i+1}$;
\item $Y_i$ can reach all points in the top row of $R_{i+1}$ and no points in $G_{i+1}$ or the bottom row of $R_i$.
\end{enumerate}
Facts 1--4 characterize the reachability within the $i^\text{th}$ block, and facts 5--6 characterize the reachability from the $i^\text{th}$ block to the $(i+1)^\text{th}$ block. Together they describe all the reachability relations in the graph.

A natural choice when producing a dominance drawing of the red and blue regions of a $st$-outerplanar graph is to keep the orientation of the upward edges and orient the downward edges rightward. In such a dominance drawing the red regions correspond to riffles and the blue regions correspond to antiriffles. This leads us to consider superpatterns for classes of permutations containing both riffles and antiriffles. Their individual superpatterns may be combined as shown in Figure~\ref{fig:riffle_suppat_cb}.

\begin{lemma}\label{lem:red-riffle}
A red region of size $n$ may be drawn on a riffle superpattern, and a blue region of size $n$ may be drawn on a antiriffle superpattern of size $n$.
\end{lemma}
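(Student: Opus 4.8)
The plan is to make precise the correspondence asserted just above: the permutation read off from the natural dominance drawing of a red region is a riffle, that of a blue region is an antiriffle, and then to apply Theorem~1. The only ingredient needed beyond Theorem~1 is that a pattern embedding preserves every coordinatewise ordering of the points and so turns a dominance drawing of a graph into another dominance drawing of the same graph; hence if the permutation of a region of size $n$ is a pattern of $\rho_n$ (resp.\ of $\rho_n^{-1}$) then the region can be drawn with its vertices on a subset of $\plot(\rho_n)$ (resp.\ of $\plot(\rho_n^{-1})$).

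First I would fix the structure of a red region of size $n$: a bottom chain $b_1\to\cdots\to b_p$, a top chain $t_1\to\cdots\to t_q$ with $p+q=n$, and interior upward edges $b_{a_1}\to t_{c_1},\dots,b_{a_m}\to t_{c_m}$ which, the region being transitively reduced, form a non-crossing staircase with $a_1<\cdots<a_m$ and $c_1<\cdots<c_m$. The natural drawing draws the upward edges vertically and the path edges horizontally, with the whole bottom chain below the whole top chain. I would certify that this is a valid (and, by transitive reducedness, non-crossing) dominance drawing by exhibiting the realizer it realizes: on the vertical axis the order $b_1<\cdots<b_p<t_1<\cdots<t_q$, and on the horizontal axis the order that interleaves the two chains so that $b_i$ precedes $t_j$ exactly when $b_i$ reaches $t_j$. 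Checking that the horizontal order is transitive and that the intersection of the two orders is exactly the reachability relation uses only the strict monotonicity of $a_k$ and $c_k$. Reading off the permutation, the bottom chain occupies the $p$ lowest rows and, scanned left to right, spells $1\,2\cdots p$, while the top chain occupies the $q$ highest rows and spells $(p{+}1)\cdots(p{+}q)$; so the permutation is an interleaving of the consecutive increasing runs $1\cdots p$ and $(p{+}1)\cdots(p{+}q)$, i.e.\ a riffle of length $n$. By Theorem~1 it is a pattern of $\rho_n$, and so the red region may be drawn on $\plot(\rho_n)$.

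For a blue region I would run the mirror-image argument. Transposing a dominance drawing across the line $y=x$ again yields a dominance drawing of the same graph while replacing the induced permutation by its inverse, and the superpattern property is closed under inverses (as already noted for $\rho_n^{-1}$). In the natural drawing of a blue region the interior edges are downward and are drawn horizontally, while the two path chains are drawn as increasing runs with the whole top chain to the left of the whole bottom chain; the realizer here is the order ``top chain then bottom chain'' on the horizontal axis together with the analogous staircase interleaving on the vertical axis, and its transitivity again follows from the monotonicity of $a_k,c_k$. Consequently the positions of the resulting permutation are partitioned into the top-chain positions followed by the bottom-chain positions, each spelling an increasing run, so the permutation is a concatenation of two increasing runs; this is exactly a permutation with at most one descent, i.e.\ an antiriffle of length $n$. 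By the inverse of Theorem~1 it is a pattern of $\rho_n^{-1}$, and so the blue region may be drawn on $\plot(\rho_n^{-1})$.

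The step I expect to be the real obstacle is the certification that the ``natural'' layout is a legitimate dominance drawing with the stated global shape --- equivalently, that the explicit pair of linear orders really is a realizer of the region's reachability order, with no spurious comparabilities. This is a structural fact about red and blue regions, relying on transitive reduction to force the staircase shape and on the way a region is delimited (its last interior edge joins the last vertex of one chain to the last vertex of the other). Once it is in place, identifying the permutations as riffles and antiriffles and invoking Theorem~1 are routine.
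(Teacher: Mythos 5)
Your proposal is correct and takes essentially the same route as the paper: the paper's (much terser) proof simply describes the direct placement — upward edges of a red region go on same-$x$ pairs of the riffle superpattern, with top-row vertices preceding bottom-row vertices between consecutive upward edges, and symmetrically for blue regions — which is exactly the embedding your argument produces by first identifying the induced permutation as a riffle (resp.\ antiriffle) and then invoking Theorem~1. Your extra step of certifying the realizer is a legitimate expansion of what the paper leaves implicit, not a different approach.
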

\begin{proof}
To draw a red region we place all upward edges on points with the same $x$-coordinate while respecting the ordering induced by the top and bottom rows. The vertices not connected to an upward arrow are then placed such that between any two upward edges the vertices on the top row are placed with lower $x$-coordinates than the vertices on the bottom row. The embedding of a blue region into a antiriffle superpattern is similar, with downward edges being drawn left to right such that the source and destination have the same $y$-coordinate.
\end{proof}

Now we construct a universal point set $Q_m$ (for $m$ a power of two) for dominance drawings of $st$-outerplanar graphs; the notation hints at our quadtree-based construction of these sets. Starting with the riffle/antiriffle superpattern of side length $m = 2^k$ (a hollow square) we split the square into quarters, adding additional vertices to produce four overlapping riffle/antiriffle superpatterns of side length $2^{k-1}$. This part of the construction is represented by black points in Figure~\ref{fig:st-out-32}.

In addition, we add $k - 1$ compressed columns of yellow points, in the upper left riffle/antiriffle superpattern, of height $2^{i}$ at position $2^{k-1} - 2^{i} + 1$ for $1 \leq i \leq k-1$. The columns are vertically compressed such that they lie entirely between the first and third row. We also add green columns of symmetric size and position in the bottom right riffle/antiriffle superpattern. Finally, we continue recursively into the upper right riffle/antiriffle superpattern and the lower left riffle/antiriffle superpattern as illustrated in Figure~\ref{fig:st-out-32}.

\begin{figure}[t]
\centering
\includegraphics[height=1in]{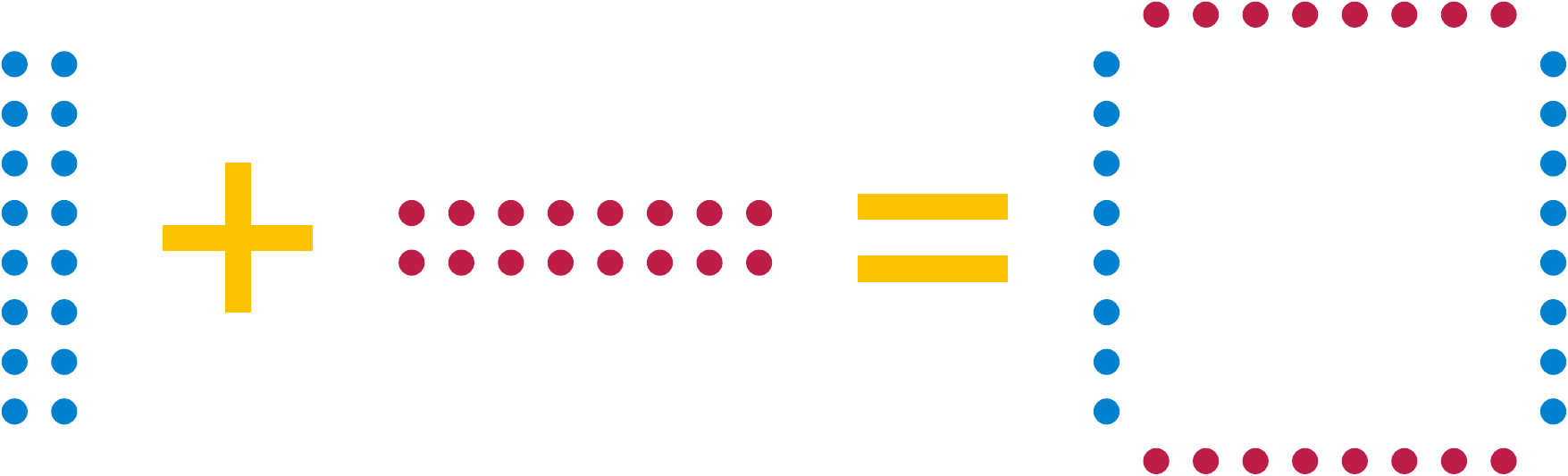}
\caption{Combining a riffle superpattern with an antiriffle superpattern, in their chessboard representations.}
\label{fig:riffle_suppat_cb}
\end{figure}

\begin{lemma}\label{lem:size-Qm}
The number of points in $Q_m$ is $4mk + 4m$ for $m = 2^k$.
\end{lemma}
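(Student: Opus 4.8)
The plan is to exploit the fact that $Q_m$ is defined recursively and to solve the resulting recurrence (equivalently, to induct on $k$). First I would enumerate the pieces that make up $Q_{2^k}$ and count the points each contributes. At the top level there is one riffle/antiriffle superpattern of side $m=2^k$ — the ``hollow square'' whose size is fixed by the riffle superpattern length $2n-1$ of Theorem~1 together with the combination in Figure~\ref{fig:riffle_suppat_cb} and Lemma~\ref{lem:red-riffle}. Then come the black points that split this square into four overlapping riffle/antiriffle superpatterns of side $m/2$: since the outer boundary is reused, these are only the new points along the internal ``cross''. Next, into the upper-left sub-superpattern we add the $k-1$ yellow columns of heights $2^1,2^2,\dots,2^{k-1}$, contributing $\sum_{i=1}^{k-1}2^i = 2^k-2 = m-2$ yellow points, and symmetrically $m-2$ green points into the lower-right sub-superpattern. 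Finally we recurse into the upper-right and lower-left sub-superpatterns, each of which is already a placed square of side $m/2$.

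This yields a recurrence $|Q_m| = c(m) + 2\,|Q_{m/2}|$, where $c(m)$ collects the frame, the subdivision cross, and the $2(m-2)$ yellow and green points, and where the factor $2$ comes from recursing into two of the four quadrants (the other two receiving the columns). I would then verify from the construction that $c(m)$ is exactly $4m$, with base case $|Q_1| = 4$ (one can instead start the induction at $m=2$, where $k-1=0$ so no yellow or green columns appear, and check $|Q_2|=16$ directly). The care needed here is to count each shared point exactly once: points common to the big square and the four sub-squares, points on the internal cross shared by adjacent sub-squares, and — crucially — the sub-square frame of a recursive quadrant, which must be attributed either to the parent's subdivision step or to the child's frame but not to both.

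Given $c(m)=4m$ and $|Q_1|=4$, unrolling the recurrence is routine:
\[
|Q_{2^k}| \;=\; \sum_{j=0}^{k-1} 2^{j}\,c\!\left(2^{k-j}\right) \;+\; 2^k|Q_1| \;=\; \sum_{j=0}^{k-1} 4\cdot 2^{k} \;+\; 4\cdot 2^{k} \;=\; 4k\cdot 2^{k} + 4\cdot 2^{k} \;=\; 4mk + 4m .
\]
The geometric telescoping is trivial; the genuine work, and the step most exposed to off-by-a-constant errors, is pinning down $c(m)$ to exactly $4m$ — i.e.\ the overlap bookkeeping described above, since the claimed formula is an exact identity rather than an asymptotic estimate. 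Once the per-level cost is established as $4m$ and the base case as $4$, the closed form follows at once.
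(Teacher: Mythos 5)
Your recurrence $|Q_m|=c(m)+2|Q_{m/2}|$, unrolled, is the same computation as the paper's proof, which directly sums the contributions of the $2^i$ squares at each of the $k-1$ subdivision levels: it charges $4m$ for the outer hollow square, $2^{i+1}\cdot\frac{m}{2^i}=2m$ new black points per level for the subdivision crosses, and at most $2(2^i-1)\frac{m}{2^i}\le 2m$ green/yellow points per level, for a total of at most $4m+2mk+2mk=4mk+4m$. So your approach is essentially the paper's, just phrased as a recurrence rather than a level-by-level sum, and your $c(m)=4m$ corresponds to the paper's combined $2m+2m$ per-level cost (the $4m$ frame being absorbed into your base term $2^k|Q_1|$). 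The one caveat is that you defer exactly the step you correctly identify as the crux — verifying $c(m)=4m$ after the overlap bookkeeping — whereas the paper sidesteps that delicacy entirely by proving only an inequality: its proof concludes that the size is \emph{at most} $4mk+4m$, which is all that is needed downstream in Theorem~\ref{thm:st-out-univ}, so your insistence on an exact identity (and hence on exact attribution of shared frame points) is unnecessary extra work; aiming for an upper bound with per-level cost $\le 4m$ would close your argument with much less care.
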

\begin{proof}
The number of black and green/yellow points in $Q_m$ are given by
\ifFull
\[
4m + \sum_{i=0}^{k-2} 2^{i+1} \frac{m}{2^i} \leq 4m + 2mk
\qquad\text{and}\qquad
\sum_{i=0}^{k-2} 2(2^{i}-1)\frac{m}{2^{i}} \leq 2mk,
\]
\else
\[
4m + \sum_{i=0}^{k-2} 2^{i+1} \frac{m}{2^i} \leq 4m + 2mk
\]
and
\[
\sum_{i=0}^{k-2} 2(2^{i}-1)\frac{m}{2^{i}} \leq 2mk,
\]
\fi
respectively. Thus, the size of $Q_m$ is at most $4mk + 4m$.
\end{proof}

For $n$ a power of two define $\mathcal{I}_n$ to be a set of intervals, containing the interval $[1,n] \subseteq \mathbf{N}$ and closed under subdivision of an interval into two equal subintervals of size greater than one. For example,
\ifFull$\else\[\fi
\mathcal{I}_8 = \{[1,8], [1,4], [5,8], [1,2],[3,4],[4,5],[6,8]\}.
\ifFull$\else\]\fi
We can think of $\mathcal{I}_n$ as a being a one-dimensional quadtree.

\begin{lemma}\label{lem:quad-tree}
For any given finite sequence $a_i$ with $\sum_i a_i \leq n$, where $n$ is a power of two, there exists a sorted sequence of contiguous and disjoint intervals $I_i$ in $\mathcal{I}_{4n}$ with $\abs{I_i} \geq a_i$.
\end{lemma}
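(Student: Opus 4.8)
The plan is to separate the two sources of difficulty---choosing interval lengths and fitting aligned dyadic intervals into the quadtree---by first rounding and then packing greedily. First I would replace each requested length $a_i$ by $b_i$, the smallest power of two with $b_i \ge a_i$ and $b_i \ge 2$ (so that $b_i$ is the width of an actual node of $\mathcal{I}_{4n}$). Since any power of two in $[a_i, 2a_i]$ works, $a_i \le b_i \le 2a_i$, so $\sum_i b_i \le 2\sum_i a_i \le 2n$; this spends half of the factor of four separating ``$\mathcal{I}_{4n}$'' from the hypothesis $\sum a_i \le n$, and the remaining factor of two will pay for alignment.

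Next I would pack the rounded intervals greedily from left to right. Keeping a cursor $p$ at the leftmost unused position, starting from $p = 1$, I would for each $i = 1, 2, \dots$ advance $p$ to the least $p' \ge p$ with $p' \equiv 1 \pmod{b_i}$, take $I_i = [p',\, p' + b_i - 1]$, and then set $p := p' + b_i$. Each $I_i$ is an aligned dyadic interval whose width $b_i$ is a power of two with $2 \le b_i \le 2n < 4n$; since $n$, and hence $4n$, is a power of two, and $\mathcal{I}_{4n}$ is precisely the set of intervals obtained by repeatedly halving $[1,4n]$ down to width two, $I_i \in \mathcal{I}_{4n}$. By construction the $I_i$ are pairwise disjoint, listed in left-to-right order, and have $|I_i| = b_i \ge a_i$.

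It remains to see that the $I_i$ stay inside $[1,4n]$. The only positions not covered by some $I_i$ are, for each $i$, the short stretch skipped in order to reach an aligned starting position, and since any $b_i$ consecutive integers contain a multiple of $b_i$, this stretch has length less than $b_i$. Hence the total length used is at most $\sum_i b_i + \sum_i b_i = 2\sum_i b_i \le 4n$, so all the $I_i$ lie inside $[1,4n]$, and the construction is complete.

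The step that forces the bound $4n$ (rather than $2n$)---and the one I expect to be the main obstacle---is the alignment: a width-$b$ dyadic interval must begin at a position $\equiv 1 \pmod b$, so the rounded blocks cannot simply be concatenated, and re-aligning them may leave gaps between consecutive $I_i$. If ``contiguous'' in the statement is meant to rule out such gaps, the same factor-of-two slack still suffices through a recursive variant: distribute $a_1, \dots, a_k$ between the two half-size subtrees of $\mathcal{I}_{4n}$ by splitting the sequence at an index where each part has total at most $n/2$; when no such split exists, peel off the single oversized $a_i$ as its own top-level dyadic block and recurse on the two surrounding sub-sequences. Carrying out the bookkeeping for those oversized elements is the only real work in that variant.
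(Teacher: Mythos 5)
Your greedy construction is, in substance, the paper's own proof: the paper likewise rounds each $a_k$ up to the nearest power of two $p\le 2a_k$ and places an aligned dyadic block of width $p$ inside the next window of length $2p$, so that the total length consumed is at most $\sum_i 4a_i \le 4n$; your cursor argument and the paper's induction are the same computation, and your observation that the minimum block width is $2$ is consistent with the definition of $\mathcal{I}_{4n}$. The one place you stop short of the statement is contiguity, and it is not a throwaway condition: the lemma is used in Theorem~\ref{thm:st-out-univ} to obtain \emph{adjacent} diagonal squares so that Lemma~\ref{lem:dom-rel} can be applied to consecutive blocks of the decomposition, so the gaps that your alignment step leaves between consecutive $I_i$ really do have to be removed. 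The paper disposes of this in one sentence: having produced sorted disjoint intervals with $\abs{I_i}\ge a_i$, it promotes each interval to a dyadic ancestor in $\mathcal{I}_{4n}$ (which can only increase its length, preserving $\abs{I_i}\ge a_i$) until consecutive intervals abut. Your proposed alternative---recursively splitting the sequence between the two half-size subtrees and peeling off oversized elements---is plausible but is explicitly left with its ``bookkeeping'' undone, so as written the contiguity half of the claim is asserted rather than proved. Either carry that recursion through or replace it with the promotion step; with that addition your argument coincides with the paper's.
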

\begin{proof}
It suffices to show that there are disjoint intervals $I_i$ in $\mathcal{I}_{4n}$ with $\abs{I_i} \geq a_i$, as the intervals can be promoted (enlarged) until they become contiguous. Assume by induction on $k$ that we have found disjoint intervals $I_i$ for all $i<k$, satisfying the induction hypothesis that each of these intervals lies in the range $I_i \subseteq [1,\ell]$ where $\ell = \sum_{i<k} a_i$. To place $a_k$ first round it up to the nearest power of two; let $p$ be this rounded value. Now $\ell+2p\le \ell+4a_k$, and in the range $[\ell + 1, \ell + 2p]$ there is an interval $I_k \in \mathcal{I}_{4n}$ with $\abs{I_k} = p \geq a_k$. Thus, the lemma follows by induction.
\end{proof}

\ifFull
\begin{figure}[t]
\else
\begin{figure*}
\fi
\centering
\includegraphics[height=3in]{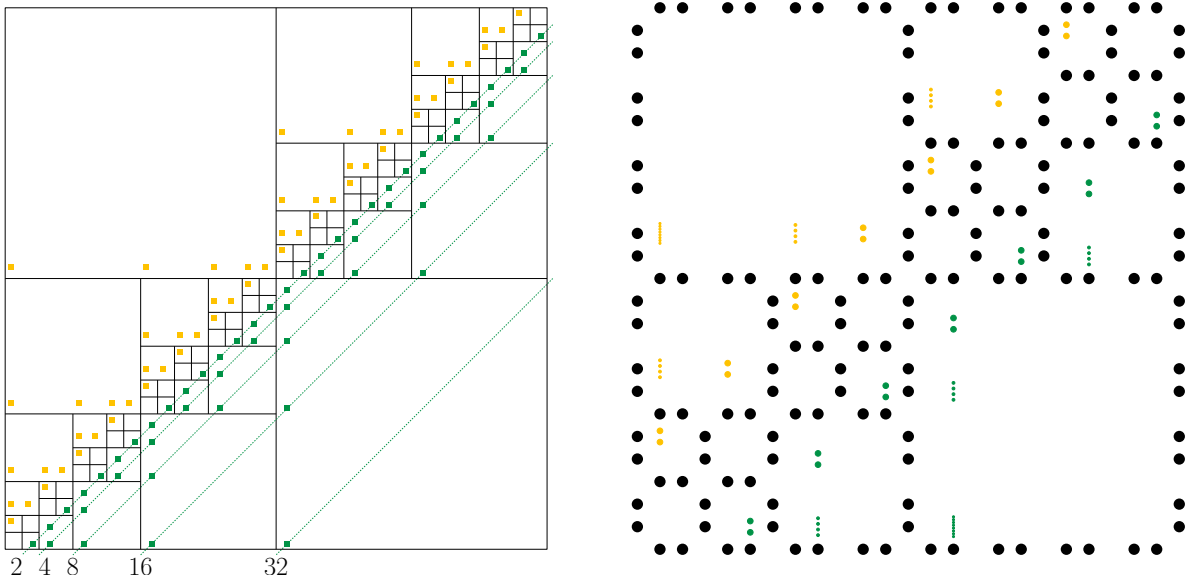}
\caption{High level view of a universal point set for $st$-outerplanar graphs with $m = 64$ (left), and a detailed view of a universal point set for $st$-outerplanar graphs with $m = 16$ (right). In the high level view the green and yellow boxes expand to columns of points. The sizes of the green boxes are given under their level curves.} 
\label{fig:st-out-32}
\ifFull
\end{figure}
\else
\end{figure*}
\fi

\begin{lemma}\label{lem:col-size}
If $S$ is a  square of side length $\ell$ on the main diagonal of $Q_m$, then there exists a column of $\ell$ yellow points two rows above $S$ and a column a green points two columns to the right of $S$.
\end{lemma}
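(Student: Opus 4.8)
The plan is to induct on $m$ along the recursive construction of $Q_m$. Recall that $Q_m$, for $m = 2^k$, is assembled by placing four overlapping side-$(m/2)$ riffle/antiriffle superpatterns in the lower-left, lower-right, upper-left, and upper-right quarters of an $m \times m$ chessboard; by inserting $k-1$ yellow columns, of heights $2^1, \dots, 2^{k-1}$, into the upper-left quarter at horizontal offsets $2^{k-1} - 2^i + 1$, compressed into a band of three rows at the bottom of that quarter; by inserting $k-1$ green columns symmetrically into the lower-right quarter; and by then recursing verbatim into the lower-left and upper-right quarters, each a copy of $Q_{m/2}$. Since these two recursive copies are inserted without modification, every yellow or green column produced at a deeper level of the recursion is already present in $Q_m$.

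First I would identify the squares on the main diagonal of $Q_m$: unwinding the recursion, these are exactly the dyadic diagonal squares $[\,c\ell+1,\,(c+1)\ell\,]^2$ of side $\ell = 2^j$, and a short induction shows that for each such $\ell < m$ all admissible values of $c$ occur. Given such a square $S$ with upper-right corner $(v,v)$, I would split into three cases according to where $v$ falls relative to $m/2$, the split point of the top-level recursion. If $v = m/2$, then $S$ shares its upper-right corner with the lower-left quarter, so $S = [\,m/2 - \ell + 1,\, m/2\,]^2$; this is precisely the square served by the freshly inserted yellow column of height $\ell = 2^i$, which lies at horizontal offset $m/2 - 2^i + 1$ (the left edge of $S$) in the band two rows above the top edge of $S$, while the symmetrically placed green column of height $\ell$ lies two columns to the right of the right edge of $S$. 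If $v < m/2$ then $S$ lies inside the lower-left quarter, and if $v > m/2$ then $S$ lies inside the upper-right quarter; in either case $S$ is one of the diagonal squares of a recursively constructed copy of $Q_{m/2}$, so the columns demanded for $S$ are supplied by the induction hypothesis and, as noted, are present in $Q_m$.

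The only step with any real friction is the arithmetic in the case $v = m/2$: one must verify that the height $2^i$ and the offset $2^{k-1} - 2^i + 1$ of the inserted column match the side length and left edge of the unique diagonal square meeting the lower-left quarter's upper-right corner, and that compressing the columns into their three-row (respectively three-column) band leaves the yellow column exactly two rows above $S$ and the green column exactly two columns to the right of $S$. The one genuinely exceptional family is the set of ``extreme'' diagonal squares whose upper-right corner is $(m,m)$ --- those reached by always descending into the upper-right quarter --- which have no rows above them inside $Q_m$; the lemma is invoked in the following section only for the remaining squares, so this family is simply excluded from the statement.
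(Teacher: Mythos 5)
Your proof is correct and takes essentially the same route as the paper's, whose entire argument is the observation that by construction there are $\log \ell$ columns above $S$, the largest and leftmost having $\ell$ points (and symmetrically to the right); you have simply made explicit the induction over the quadtree recursion that the paper leaves implicit. Your caveat about the diagonal squares whose upper-right corner is $(m,m)$ flags a genuine boundary case that the paper's statement and proof silently ignore, and is a useful addition rather than a flaw.
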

\begin{proof}
By construction there are $\log \ell$ columns above $S$ the largest and furthest left having $\ell$ points. Similarly, there are $\log \ell$ columns to the right of $S$ the largest and lowest down having size $\ell$ points.
\end{proof}

\begin{lemma}\label{lem:dom-rel}
Let $S_1$ and $S_2$ be adjacent diagonal squares, and let $G_1, Y_1$ and $G_2, Y_2$ be the green and yellow columns of Lemma~\ref{lem:col-size} for $S_1$ and $S_2$. Then
\begin{enumerate}
\item the points in $G_1$ (respectively $Y_1$) are independent of the points in $Y_2$ (respectively $G_2$) with respect to the dominance relation;
\item the points in $G_1$ are dominated by the right side of $S_2$, but independent of the left side;
\item the points in $Y_1$ are dominated by the top side of $S_2$, but independent of the bottom side.
\end{enumerate}
\end{lemma}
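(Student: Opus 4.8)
My plan is to translate each of the three claims into a comparison of chessboard row and column indices, reading off the index ranges from the recursive construction of $Q_m$ together with Lemma~\ref{lem:col-size}, and then verifying a handful of one-line inequalities. So the first step is to record the relevant positions. Writing $I^x_j \times I^y_j$ for the columns $\times$ rows occupied by $S_j$ ($j=1,2$), the hypothesis that $S_1$ and $S_2$ are adjacent diagonal squares with $S_1$ the lower-left one gives that $I^x_1$ lies entirely to the left of $I^x_2$ and $I^y_1$ entirely below $I^y_2$, with $S_2$ starting just where $S_1$ ends. By Lemma~\ref{lem:col-size} applied to $S_1$, $G_1$ is a single column of $|S_1|$ points whose column index is just to the right of $I^x_1$ -- in particular strictly inside $I^x_2$ -- and whose rows are compressed into the bottom of $I^y_1$, hence strictly below $I^y_2$; symmetrically $Y_1$ is a single column of $|S_1|$ points whose row index is just above $I^y_1$, hence strictly inside $I^y_2$, and whose column index lies at the left of $I^x_1$, hence strictly to the left of $I^x_2$. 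Applying Lemma~\ref{lem:col-size} to $S_2$ locates $G_2$ and $Y_2$ in the same way relative to $S_2$; combined with the location of $S_2$ relative to $S_1$, one checks in addition that the column index of $Y_2$ lies strictly to the left of the column index of $G_1$, and symmetrically that the rows used by $G_2$ lie below the rows used by $Y_1$.

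Granting these ranges, parts 2 and 3 are immediate. The right side of $S_2$ is its rightmost column, every point of which has column index $\max I^x_2$ and row index in $I^y_2$; since every point of $G_1$ has a strictly smaller column index and a row index below $I^y_2$, every point of the right side dominates every point of $G_1$. The left side of $S_2$ is its leftmost column, with column index $\min I^x_2$, which is strictly smaller than the column index of every point of $G_1$, so no point of the left side dominates a point of $G_1$; and every point of $G_1$ has row index below $I^y_2$, so no point of $G_1$ dominates a point of the left side either. Hence $G_1$ is dominated by the right side of $S_2$ and independent of its left side, which is part 2, and interchanging the roles of rows and columns (and of ``right/left side'' and ``top/bottom side'') turns the same argument into a proof of part 3 for $Y_1$.

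For part 1, every point of $G_1$ has row index below $I^y_2$ while every point of $Y_2$ has row index in $I^y_2$, so no point of $G_1$ dominates a point of $Y_2$; and the column index of $Y_2$ is strictly to the left of that of $G_1$, so no point of $Y_2$ dominates a point of $G_1$; hence $G_1$ and $Y_2$ are independent, and the mirror-image comparison (columns of $Y_1$ versus columns of $G_2$, together with the row comparison) gives independence of $Y_1$ and $G_2$. The step I expect to require real care is the first one: pinning down from the construction exactly where the short auxiliary columns of the \emph{later} square $S_2$ sit relative to those of $S_1$, since they are introduced at possibly different depths of the recursion and in different off-diagonal quadrants. In particular one must make precise the word ``adjacent'' and the ``$+1$''/``$+2$'' offsets of the construction that force the column of $Y_2$ to the left of the column of $G_1$ and place $G_1$ in the interior of $I^x_2$ rather than outside it; once those offsets are fixed, everything reduces to the two-coordinate comparisons above.
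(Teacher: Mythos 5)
Your proof is correct and follows essentially the same route as the paper's: it locates the green and yellow auxiliary columns relative to the two diagonal squares via Lemma~\ref{lem:col-size} and the ``$+1$/$+2$'' offsets of the construction, and then reduces each of the three claims to two-coordinate comparisons --- you simply make explicit the index bookkeeping that the paper compresses into ``parts 2 and 3 follow directly from the construction'' and a two-line positional remark for part 1. One small internal slip: by your own setup the rows of $Y_2$ lie just \emph{above} $I^y_2$ rather than ``in'' it, but since those rows still exceed the rows of $G_1$ the independence argument in part 1 goes through unchanged.
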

\begin{proof}
Parts 2. and 3. follow directly from the construction. For part 1. first notice that the points of $G_1$ are placed to the left of $Y_2$. Now since $G_1$ is placed is placed two rows above $S_1$ and $Y_2$ is placed one row above $S_1$ we have that ever point in $G_1$ is above every point in $Y_2$. Thus, they are independent. The symmetric argument proves the independence of $Y_1$ and $G_2$.
\end{proof}

\begin{theorem}\label{thm:st-out-univ}
There exist universal point sets for dominance drawings of $st$-outerplanar graphs of size $32n\log n + \Theta(n)$. In particular, $Q_{8n}$ is a  universal point set for dominance drawings of $st$-outerplanar graphs.
\end{theorem}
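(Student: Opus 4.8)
The plan is to combine the block decomposition of $st$-outerplanar graphs described above with the quadtree structure of $Q_m$: Lemma~\ref{lem:red-riffle} draws each block on one diagonal square of $Q_{8n}$, while Lemmas~\ref{lem:col-size} and~\ref{lem:dom-rel} connect consecutive blocks through the green and yellow columns. First I would settle the size bound. Assume $n$ is a power of two; otherwise replace $n$ by the next power of two, which is at most $2n$ and only affects the $\Theta(n)$ term. Then $8n = 2^{k}$ with $k = \log n + 3$, so Lemma~\ref{lem:size-Qm} gives $\abs{Q_{8n}} = 4\cdot 8n\cdot k + 4\cdot 8n = 32n\log n + \Theta(n)$.

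For universality, let $G$ be an $st$-outerplanar graph on $n$ vertices; we may assume $G$ is transitively reduced, since a dominance drawing realizes the reachability order of its vertices and is unaffected by adding or removing transitive edges. Compute the decomposition of $G$ into left-to-right blocks $\beta_1,\dots,\beta_t$ with $\beta_i = R_i\cup B_i\cup G_i\cup Y_i$; since the blocks partition $V(G)$ we have $\sum_i\abs{\beta_i}=n$. For each $i$, let $a_i$ be the side length of the combined riffle/antiriffle superpattern (as in Figure~\ref{fig:riffle_suppat_cb}) needed to host $R_i$, $B_i$, and their source paths via Lemma~\ref{lem:red-riffle}; since a riffle of length $r$ has a superpattern whose chessboard has side $O(r)$, one checks that $a_i\le 2\abs{\beta_i}$, so $\sum_i a_i\le 2n$. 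Applying Lemma~\ref{lem:quad-tree} with parameter $2n$ produces a sorted sequence of disjoint contiguous intervals $I_i\in\mathcal{I}_{8n}$ with $\abs{I_i}\ge a_i$. Because the recursion building $Q_m$ subdivides the main diagonal in exactly the pattern that defines $\mathcal{I}_m$, each $I_i$ names a diagonal square $S_i$ of side $\abs{I_i}$ in $Q_{8n}$; consecutive intervals name adjacent diagonal squares in the sense of Lemma~\ref{lem:dom-rel}; and by Lemma~\ref{lem:col-size} each $S_i$ carries a yellow column two rows above it and a green column two columns to its right, each of size $\abs{I_i}\ge\abs{\beta_i}$.

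Now place $G$: inside $S_i$, draw $R_i$ on the riffle part and $B_i$ on the antiriffle part by Lemma~\ref{lem:red-riffle}, put the vertices of $G_i$ on the green column of $S_i$, put the vertices of $Y_i$ on the yellow column of $S_i$, and route the source-path vertices along the appropriate rows. It remains to check that the resulting dominance order on the placed points coincides with the reachability order of $G$, which I would do by comparing it against reachability facts 1--6: facts 1--4 (relations within a single block) follow from the intra-square behavior guaranteed by Lemma~\ref{lem:red-riffle} together with the placement of the columns two rows above and two columns to the right of the square and their vertical compression between the first and third rows; facts 5--6 (relations from block $i$ to block $i+1$, in particular that $B_i$ reaches all of $R_{i+1}$ and that $Y_i$ reaches the top but not the bottom row of $R_{i+1}$) are precisely the statement of Lemma~\ref{lem:dom-rel} together with the fact that $S_{i+1}$ lies entirely up and to the right of the antiriffle part of $S_i$; and there are no dominance relations between non-consecutive blocks, because the intervals $I_i$ are disjoint and contiguous so that $S_i$, $S_j$ with $j>i+1$ and all of their columns are separated along the diagonal.

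I expect the main obstacle to be precisely this last verification, namely ruling out spurious dominances: beyond showing that every edge of $G$ is realized, one must show that no two vertices incomparable in $G$ become comparable under dominance in the drawing. The delicate points are the interaction of each $S_i$ with the green and yellow columns of the neighboring blocks, and the placement of the cross-block region $Y_i$ relative to both $S_i$ and $S_{i+1}$ --- exactly the configurations that Lemma~\ref{lem:dom-rel} was set up to control. None of this is conceptually deep given Lemmas~\ref{lem:red-riffle}, \ref{lem:col-size}, and~\ref{lem:dom-rel}, but it demands a careful case analysis over the six reachability facts; once that is complete, the size computation above finishes the proof.
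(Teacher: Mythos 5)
Your overall architecture matches the paper's: decompose the graph into colored regions, use Lemma~\ref{lem:quad-tree} to allocate disjoint contiguous diagonal squares of $Q_m$, and invoke Lemmas~\ref{lem:red-riffle}, \ref{lem:col-size}, and~\ref{lem:dom-rel} to realize the six reachability facts; your size computation for $Q_{8n}$ is also correct. But there is a genuine gap in how you map the decomposition onto the quadtree. You allocate one diagonal square $S_i$ per full block $\beta_i=R_i\cup B_i\cup G_i\cup Y_i$, drawing $R_i$ on the riffle (horizontal) part and $B_i$ on the antiriffle (vertical) part of the \emph{same} square. The paper instead splits each block into its upward half $R_j\cup G_j$ and its downward half $B_j\cup Y_j$ and gives each half its \emph{own} diagonal square (this is the alternating sequence $X_{2j+1}=R_j\cup G_j$, $X_{2j}=B_j\cup Y_j$ fed to Lemma~\ref{lem:quad-tree}), so that the red and blue regions of one graph block occupy consecutive, disjoint squares along the diagonal. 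This difference is not cosmetic. Fact 2 requires every point in the bottom row of $B_i$ to dominate every point of $G_i$; but you place $G_i$ on the green column of $S_i$, which by Lemma~\ref{lem:col-size} lies two columns to the \emph{right} of $S_i$, hence to the right of every point of $B_i$ in your placement, so no point of $B_i$ dominates any point of $G_i$. Similarly, fact 1 requires all of $B_i$ to dominate all of $R_i$, which is automatic when $B_i$'s square lies strictly above and to the right of $R_i$'s square but is not guaranteed when both occupy the horizontal and vertical strips of one hollow square, since those strips meet near a corner. Lemma~\ref{lem:dom-rel} is stated precisely for the paper's layout --- its parts 2 and 3 relate the green and yellow columns of one square to the sides of the \emph{next} square, which is where the blue region (respectively, the next red region) actually lives in the alternating scheme --- so it does not supply the relations your placement needs. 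The ``careful case analysis over the six reachability facts'' that you defer is therefore not merely tedious: with your placement it cannot go through.

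A secondary issue is the constant. You bound the side length needed for a block by $a_i\le 2\abs{\beta_i}$ and apply Lemma~\ref{lem:quad-tree} with parameter $2n$, landing in $\mathcal{I}_{8n}$ already when $n$ is a power of two; rounding a general $n$ up to a power of two then forces something like $Q_{16n}$, of size $64n\log n+\Theta(n)$ rather than the claimed $32n\log n+\Theta(n)$. The relevant measure of a riffle superpattern here is the side length of its chessboard, which is about $r$ for a riffle of length $r$ (its $2r-1$ elements occupy only two chessboard rows), so the paper's half-blocks satisfy $\sum_i\abs{X_i}=n$ exactly, Lemma~\ref{lem:quad-tree} places them in $\mathcal{I}_{4n}$, and $Q_{4n}$ suffices for $n$ a power of two, with a single final doubling to $Q_{8n}$ for general $n$.
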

\begin{proof}
We show that $Q_{4n}$ is universal for $st$-outerplanar graphs of size $n$, when $n$ is a power of two; it follows for this that $Q_{8n}$ is universal for all $n$. Given an $st$-outerplanar $G$ graph with $n$ vertices ($n$ a power of two) we partition it into blue $B_i$, red $R_i$, green $G_i$, and yellow $Y_i$ regions. We define the sequence $X_i$ such that $X_{2j + 1} = R_j \cup G_j$ and $X_{2j} = B_j \cup Y_j$ for $j \geq 1$. Since $\sum_i \abs{X_i} = n$, there exist by Lemma~\ref{lem:quad-tree} a set of disjoint contiguous intervals $I_i$ in $\mathcal{I}_{4n}$ with $\abs{I_i} \geq \abs{X_i}$. The intervals $I_i$ correspond to the recursively created riffle/antiriffle superpatterns in $Q_{4n}$ on the diagonal. These are points in $Q_{4n}$ we will use.

If $X_i$ is an upward region (red/green), then its red parts are drawn on the horizontal points of the riffle/antiriffle superpattern corresponding to $I_i$ and its green part is drawn on the green column of size $\abs{I_i}$ added to the right of the riffle/antiriffle superpattern. Symmetrically, if $X_i$ is a downward region (blue/yellow), then the blue parts are drawn on the vertical points of the riffle/antiriffle superpattern and its yellow part is drawn on the yellow column of size $\abs{I_i}$ added to above the riffle/antiriffle superpattern directly above. This yields a dominance drawing of $G$ in $Q_{4n}$, by Lemmas~\ref{lem:red-riffle}, \ref{lem:col-size} and \ref{lem:dom-rel}. It follows that $Q_{8n}$, a set of size $32n\log n + \Theta(n)$ by Lemma~\ref{lem:size-Qm}, is universal for dominance drawings of $st$-outerplanar graphs.
\end{proof}

The \emph{concatenation} or \emph{skew sum} $\sigma \ominus \tau$ of two permutations $\sigma $ and $\tau$ has $(\sigma \ominus \tau)_i$ equal to $\sigma_i$ for $1 \leq i \leq \abs{\sigma}$ and equal to $\tau_i + \abs{\sigma}$ otherwise. Define the set of $\emph{skew riffle}$ permutations to be the minimal set that contains the riffles and antiriffles and is closed under skew sums. Compared to the permutations for $st$-outerplanar graphs, the skew riffles have a simplified decomposition in which each element belongs uniquely to a riffle or antiriffle, allowing smaller superpatterns for these permutations.

\begin{theorem}
Skew riffles have superpattern of size $16n\log n + \Theta(n)$.
\end{theorem}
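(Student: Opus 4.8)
The plan is to reuse the quadtree construction behind Theorem~\ref{thm:st-out-univ}, but in a stripped-down form. Since a skew riffle decomposes into riffle and antiriffle blocks that are mutually unrelated in the dominance order except for the all-or-nothing domination between consecutive blocks (each element lies in exactly one block, and there are no connecting ``green''/``yellow'' paths), the green and yellow columns of $Q_m$ become unnecessary; deleting them discards exactly half of the point set, turning the $32n\log n$ bound of Theorem~\ref{thm:st-out-univ} into $16n\log n$.

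First I would record the structural decomposition. By definition every skew riffle $\pi$ of length $n$ can be written as $\rho^{(1)}\ominus\rho^{(2)}\ominus\cdots\ominus\rho^{(t)}$ with each $\rho^{(j)}$ a riffle or an antiriffle; setting $a_j=\abs{\rho^{(j)}}$ we have $\sum_j a_j=n$. In $\plot(\pi)$ the $j$-th block occupies a contiguous run of columns and a contiguous run of rows, every point of block $j$ is dominated by every point of block $j'$ whenever $j<j'$, and distinct blocks are related in no other way. This is precisely the ``simplified decomposition'' alluded to before the statement: there is no analogue of the overlapping source paths, green regions, or yellow regions that occur for an $st$-outerplanar graph.

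Next I would define the superpattern $P_m$, for $m=2^k$, by taking the construction of $Q_m$ and keeping only its \emph{black} points --- the nested family of riffle/antiriffle superpatterns placed along the main diagonal, one of side length $\abs{I}$ for each interval $I\in\mathcal{I}_m$ --- while discarding all green and yellow columns. The black-point tally inside the proof of Lemma~\ref{lem:size-Qm} already isolates this count: it is the term $4m+\sum_{i=0}^{k-2}2^{i+1}(m/2^i)\le 2mk+O(m)$, without the green/yellow term. Then, given a skew riffle $\pi$ of length $n$ (with $n$ a power of two) and block sizes $a_1,\dots,a_t$, I would apply Lemma~\ref{lem:quad-tree} to the sequence $(a_j)$ to obtain a sorted list of disjoint, contiguous intervals $I_1,\dots,I_t\in\mathcal{I}_{4n}$ with $\abs{I_j}\ge a_j$. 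Each $I_j$ names one of the diagonal riffle/antiriffle superpatterns of $P_{4n}$, and into it I would embed $\rho^{(j)}$ --- a riffle or antiriffle of length $a_j\le\abs{I_j}$ --- using Lemma~\ref{lem:red-riffle} (equivalently, the riffle-superpattern theorem applied to $\rho^{(j)}$ or to its inverse). Because the $I_j$ are disjoint, contiguous and sorted along the diagonal, the image of block $j$ is dominated by the image of block $j'$ for $j<j'$ and is otherwise unrelated to it, matching exactly the relations of the skew sum; hence these images together realize $\pi$ as a pattern of $P_{4n}$. Thus $P_{4n}$ is a skew-riffle superpattern for every power-of-two $n$, and taking the next power of two shows $P_{8n}$ (of side length at most $8n$) works for all $n$, of size at most $2\cdot 8n\cdot\log(8n)+O(n)=16n\log n+\Theta(n)$.

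The step I expect to be the main obstacle is justifying that the green and yellow columns can simply be removed. In $Q_m$ those columns carry the partial reachability that green and yellow regions of an $st$-outerplanar graph have with the red and blue regions flanking them (Lemma~\ref{lem:dom-rel}); the content of the new argument is the observation that for a skew riffle there are no intermediate regions at all, so once the block decomposition above is in hand, the only inter-block dominance relations are the total ones already supplied by the diagonal placement, and nothing needs to bridge consecutive blocks. The remaining points --- confirming that the retained point count is precisely the black-point part of Lemma~\ref{lem:size-Qm}, the orientation bookkeeping when a block is an antiriffle rather than a riffle, and the passage from powers of two to general $n$ --- are routine.
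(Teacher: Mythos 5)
Your proposal is correct and follows essentially the same route as the paper: the paper's own proof is the single sentence that removing the green and yellow segments from the universal point set $Q_m$ yields the skew-riffle superpattern, and your argument is exactly the fleshed-out version of that remark (block decomposition via the skew-sum structure, interval allocation via Lemma~\ref{lem:quad-tree}, per-block embedding via Lemma~\ref{lem:red-riffle}, and the black-point count from Lemma~\ref{lem:size-Qm}). The only nitpick is that deleting the green/yellow columns removes half of the \emph{leading term} ($2mk$ out of $4mk+4m$) rather than exactly half the point set, but this does not affect the stated bound.
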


\begin{proof}
Removing the green and yellow segments from the universal point set produces the chessboard representation of a superpattern for skew riffles.
\end{proof}

\subsection{Partial orders of width two}
In this section we consider dominance drawings of partial orders of width two. 
Since such partial orders have dimension at most two, they have dominance drawings. An explicit dominance drawing may be obtained by partitioning the partial order into two chains (Dilworth's theorem), and placing each vertex $v$ at a point whose $x$ coordinate is the earliest position reachable from $v$ on one chain (or the position of $v$ itself, if it belongs to the chain) and whose $y$ coordinate is the earliest position reachable from $v$ on the other chain.

\begin{theorem}
Dominance drawings of $n$-element width two partial orders have universal sets of size $22n^{3/2} + \Theta(n)$.
\end{theorem}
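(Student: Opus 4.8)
The plan is to pin down exactly which permutations arise from dominance drawings of width-two partial orders and then read off the bound from Theorem~\ref{thm:321-constant}. First I would recall, as noted in Section~2, that a width-two partial order $P$ has dimension at most two, so its covering (Hasse) graph has a dominance drawing; after the shearing and integer normalization described near Figure~\ref{fig:dom2perm}, such a drawing corresponds to a permutation $\sigma \in S_n$ whose plot realizes the reachability ($=$ dominance) order of $P$. Partition $P$ into two chains $C_1, C_2$ by Dilworth's theorem~\cite{Dil-AM-50} (one chain possibly empty). Within a chain every two elements are comparable, hence the points of $C_i$ appear in $\plot(\sigma)$ as an increasing subsequence; so $\sigma$ is an interleaving of two increasing subsequences, i.e.\ a $2$-increasing, equivalently $321$-avoiding, permutation.

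Conversely I would check that every $\sigma \in S_n(321)$ arises in this way. Being $2$-increasing, $\sigma$ decomposes into increasing subsequences $A$ and $B$; in $\plot(\sigma)$ each of $A$ and $B$ is a chain under the dominance order, so any antichain contains at most one point of $A$ and at most one of $B$, and the dominance order of $\plot(\sigma)$ has width at most two. Thus $\plot(\sigma)$ is a (possibly nonplanar) dominance drawing of the Hasse diagram of a width-two partial order on $n$ elements.

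Combining the two directions with the observation from the introduction that the plot of a $P$-superpattern is a universal point set for every graph admitting a dominance drawing that induces a permutation in $P$, it follows that a point set is universal for dominance drawings of width-two partial orders on $n$ elements precisely when (up to normalization) it is the plot of an $S_n(321)$-superpattern: if $\sigma$ embeds into $\tau$ as a pattern, the corresponding coordinate-order-preserving injection $\plot(\sigma) \hookrightarrow \plot(\tau)$ preserves every dominance relation on the nose, so it carries the dominance drawing of $P$ onto a subconfiguration of $\plot(\tau)$, after which the Hasse-diagram edges are routed as monotone staircases (crossings permitted, since these graphs need not be $st$-planar). Taking $\tau$ to be the $S_n(321)$-superpattern of size $22n^{3/2} + \Theta(n)$ from Theorem~\ref{thm:321-constant} yields the stated universal point set.

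The content here is essentially bookkeeping, and the $n^{3/2}$ bound together with its leading constant is inherited wholesale from Theorem~\ref{thm:321-constant}, so there is no analytic obstacle. The one point requiring care is the exact characterization in both directions --- in particular reading ``width two'' as ``width at most two'', so that permutations whose plots form a single chain are also accounted for --- together with the routine but essential fact that pattern containment preserves all dominance relations exactly, which is what lets the $321$-avoiding superpattern double as a universal point set.
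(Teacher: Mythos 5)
Your proposal is correct and follows essentially the same route as the paper: show that the permutation induced by a dominance drawing of a width-two partial order is $321$-avoiding, then invoke the $S_n(321)$-superpattern of size $22n^{3/2}+\Theta(n)$ from Theorem~\ref{thm:321-constant}. The paper argues $321$-avoidance by noting a $321$ pattern would yield a three-element antichain, while you argue the contrapositive via a Dilworth chain decomposition into two increasing subsequences; these are equivalent, and your extra converse direction and the ``precisely when'' characterization of universal point sets (which is an overstatement, but unused) are harmless additions to an otherwise identical argument.
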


\begin{proof}
The permutation corresponding to a dominance drawing of a width-two partial ordering must necessarily avoid the pattern $321$, as the elements of such a pattern would form an antichain of size three (Figure~\ref{fig:321-antichain}). Thus, if $\mu_n$ is an $S_n(321)$-superpattern, $\plot(\mu_n)$ is a universal point set for dominance drawings of $n$-element width two partial orders.
The result follows from Theorem~\ref{thm:321-constant}.
\end{proof}

\begin{figure}
\centering
\includegraphics[height=1.25in]{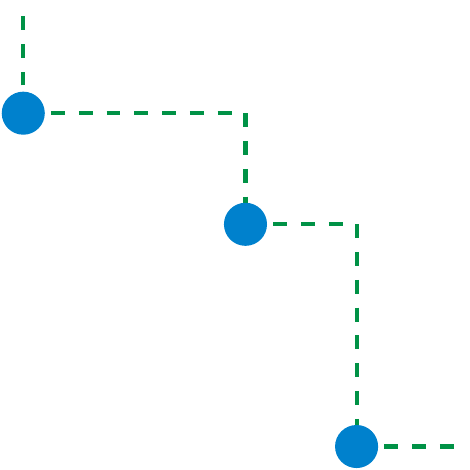}\qquad\qquad
\includegraphics[height=1.25in]{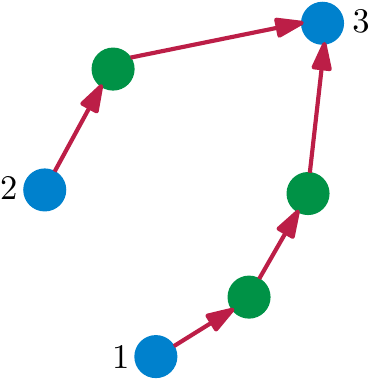} 
\caption{Left: If the dominance drawing of a partial order contains a 321 pattern then it has a antichain of length three. Right: A dominance drawing whose permutation has a 213 pattern cannot be a tree.}
\label{fig:321-antichain}
\label{fig:213-dom-tree}
\end{figure}

\subsection{Trees}
A \emph{directed tree} is a tree whose edges are directed from a root to its leaves. Such a graph can be transformed to a $st$-planar graph by adding a sink $t$ and directed edges from each leaf to $t$. Thus, every directed  tree has a planar dominance drawing. In addition, the permutation defined by a dominance drawing of a tree is $213$-avoiding: in a $213$ pattern, the vertex $v_3$ corresponding to the $3$ element dominates the other two vertices $v_1$ and $v_2$, so the corresponding graph would have paths from $v_1$ and $v_2$ to $v_3$, but no path from $v_1$ to $v_2$ nor vice versa; this is impossible in a directed tree (Figure~\ref{fig:213-dom-tree}).

\begin{theorem}
There exist universal point sets for planar dominance drawings of rooted trees of size $n^2/4+ \Theta(n)$.
\end{theorem}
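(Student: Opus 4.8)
The plan is to deduce the theorem from the $S_n(213)$-superpattern construction of Bannister \emph{et al.}~\cite{BanCheDev-GD-2013}, in the same way that the width-two result was deduced from the $S_n(321)$-superpattern. Two ingredients are already in hand from the paragraph above: every $n$-vertex directed tree has a planar dominance drawing, and the permutation of \emph{any} dominance drawing of a directed tree avoids the pattern $213$, since the ancestors of a vertex form a chain and therefore no two incomparable vertices can reach a common vertex. Hence every permutation that arises from a dominance drawing of an $n$-vertex rooted tree lies in $S_n(213)$. The construction of~\cite{BanCheDev-GD-2013} produces an $S_n(213)$-superpattern $\sigma$ with $\abs{\sigma}=n^2/4+\Theta(n)$, and the theorem will follow once we verify that $\plot(\sigma)$ is a universal point set for planar dominance drawings of $n$-vertex rooted trees.

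To establish universality, I would take an arbitrary $n$-vertex directed tree $T$, fix a planar dominance drawing of it, and let $\tau\in S_n(213)$ be the corresponding permutation. Since $\sigma$ is an $S_n(213)$-superpattern, $\tau$ occurs as a pattern of $\sigma$; pick an $n$-point subset $P\subseteq\plot(\sigma)$ realizing this occurrence, so the points of $P$ have exactly the same left-to-right order and bottom-to-top order as the points of $\plot(\tau)$. Drawing $T$ with its vertices placed on $P$ according to this correspondence, the coordinate-wise dominance order among the points of $P$ agrees with that among the points of $\plot(\tau)$, which in turn agrees with reachability in $T$; so the placement is a dominance drawing of $T$ with all vertices in $\plot(\sigma)$.

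The step I expect to require the most care is confirming that this new drawing is \emph{plane} --- it is the drawing on $\plot(\sigma)$, not merely the original drawing of $T$, that must be certified crossing-free, and the ``automatic non-crossing'' statement quoted in the introduction was phrased for $st$-planar graphs, whereas a directed tree is not $st$-planar. I would handle this by the coordinate-independent bounding-box argument valid for any transitively reduced DAG: because $T$ is transitively reduced, the axis-aligned bounding box of each edge $uv$ of the drawing contains no other vertex (a vertex $w$ inside it would satisfy $u\le w\le v$ coordinate-wise, hence $u$ reaches $w$ reaches $v$ in $T$, making $uv$ a transitive edge); and two monotone up-and-to-the-right segments whose bounding boxes are vertex-free cannot cross, since a crossing point would force an endpoint of one segment into the bounding box of the other. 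The degenerate case of axis-parallel edges is handled the same way, using weak inequalities, as in the standard treatment of dominance drawings~\cite{GD-Book}. Thus the drawing of $T$ on $\plot(\sigma)$ is a planar dominance drawing.

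Putting these together, $\plot(\sigma)$ is universal for planar dominance drawings of $n$-vertex rooted trees, and it has $\abs{\sigma}=n^2/4+\Theta(n)$ points, as claimed. No padding of the permutation is needed, since $T$ has exactly $n$ vertices, so the leading constant $1/4$ transfers without loss.
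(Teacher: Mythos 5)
Your overall route is the same as the paper's: observe that every dominance drawing of a directed tree induces a $213$-avoiding permutation, and use $\plot$ of the $S_n(213)$-superpattern of Bannister \emph{et al.} as the universal point set (the paper additionally prepends a minimum first element to host the root, which does not affect the $n^2/4+\Theta(n)$ bound). The one step that fails as stated is the geometric lemma you lean on for planarity: it is \emph{not} true that two up-and-to-the-right segments with vertex-free bounding boxes cannot cross. For instance, the segment from $(0,0)$ to $(10,1)$ and the segment from $(4,-5)$ to $(6,5)$ cross, yet no endpoint of either lies in the bounding box of the other; placing all four dominance pairs as edges even realizes this as a crossing in a legitimate dominance drawing of a transitively reduced DAG, consistent with the paper's remark that dominance drawings of width-two Hasse diagrams may have crossings. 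So vertex-free bounding boxes alone do not certify that the re-embedded drawing is plane.

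The gap is local and fixable using the tree structure you already invoked for $213$-avoidance. If edges $u_1v_1$ and $u_2v_2$ cross at a point $p$, then $u_1\le p\le v_2$ and $u_2\le p\le v_2$ coordinate-wise, so by the dominance property both $u_1$ and $u_2$ reach $v_2$; since the ancestors of a vertex in a tree form a chain, $u_1$ and $u_2$ are comparable, say $u_1\le u_2$. Combined with $u_2\le p\le v_1$, this places $u_2$ in the bounding box of the tree edge $u_1v_1$, i.e., $u_1$ reaches $u_2$ reaches $v_1$, contradicting the uniqueness of the path from $u_1$ to its child $v_1$ (equivalently, transitive reduction). With this repair, plus the routine degenerate cases, your argument is complete and in substance matches the paper's (which leaves the planarity verification implicit).
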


\begin{proof}
Bannister \emph{et al.}~\cite{BanCheDev-GD-2013} showed that $S_n(213)$ has a superpattern $\mu_n$ of size $n^2 / 4 + n + O(1)$. For a given $n$, let $\rho_n$ be a permutation whose first element is its smallest and whose remaining elements are order-isomorphic to $\mu_{n-1}$.
 The point set $\plot(\rho_n)$ provides the desired universal point set.
\end{proof}

We remark that another result of Bannister \emph{et al.}, on the superpatterns of proper subclasses of the $213$-avoiding permutations, implies that for every constant $s$ there exist universal point sets of size $O(n\log^{O(1)} n)$ for dominance drawings of the trees of Strahler number~$\le s$. We omit the details. 

\omitsection{
\section{Lower bound}

\begin{theorem}
Every $S_n(321)$-superpattern has size $\frac{1}{3}(n \log n + n)$.
\end{theorem}
\begin{proof}
Let $\pi$ be a $S_n(321)$-superpattern.  Consider the sequence of $321$-avoiding patterns $s_i$ where $s_i$ is a pattern of $2^{i-1}$ riffle shuffles each of $n/2^{i-1}$ elements with each riffle being placed immediately to the right and above the previous one.  \todo{Define the riffle shuffle sequence better}  Consider embedding each $s_i$ into $\pi$ in order from $i = 1$ to $\log_n - 1$.  We will be marking halves of the riffle shuffles that must be placed on disjoint pieces of $\pi$ and will let $m_i$ count the number of marked halves of $s_i$.  Mark both halves of $s_1$ the single riffle as they have nothing to intersect with.  Now when embedding $s_i$ which has $2^i$ halves, we know at most two halves can be fit into any previously marked half.  Therefore we mark $m_i = 2^i - 2\sum_{j < i} m_j$ riffle halves.  Using the fact that $m_1 = 2$, this recursion solves to:

\begin{align*}
m_i = \frac{2^i - 4 (-1)^i}{3}
\end{align*}

Now the total number of elements in $\pi$ must be greater than or equal to:

\begin{align*}
\sum_{i=1}^{\log n - 1} \frac{n}{2^{i}} m_i & = \sum_{i=1}^{\log n - 1} \frac{n}{2^{i+1}} \frac{2^i - 4 (-1)^i}{3}\\
  & = \frac{n}{3} \sum_{i=1}^{\log n - 1} \frac{1}{2^{i}} (2^i - 4 (-1)^i)\\
  & = \frac{n}{3} \sum_{i=1}^{\log n - 1} \left(1 - 4 \left(\frac{-1}{2}\right)^i\right)\\
  & = \frac{1}{3} n\log n - \frac{n}{3} - \frac{4n}{3}\sum_{i=1}^{\log n - 1} \left(\frac{-1}{2}\right)^i\\
  & = \frac{1}{3} n\log n - \frac{n}{3} + \frac{4n}{9}\left(1 - \left(\frac{-1}{2}\right)^{\log n - 1}\right)\\
  & \leq \frac{1}{3} n\log n - \frac{n}{3} + \frac{2n}{3} = \frac{1}{3} n\log n + \frac{n}{3}\\
\end{align*}

\end{proof}

\begin{corollary}
If $U_n$ is a sequence of universal point sets for dominance drawing of st-outerplanar graphs, then the size of $U_n$ is $\Omega(n\log n)$.
\end{corollary}
}

\section{Conclusion}

We have extended the connection between dominance drawing and permutations, and between universal point sets and superpatterns, initially discovered by Bannister \emph{et al.}~\cite{BanCheDev-GD-2013}. Using this connection, we have found new small universal sets for dominance drawings of width-two Hasse diagrams, $st$-outerplanar graphs, and directed trees, and new small superpatterns for $321$-avoiding permutations and several of their subclasses.
Interestingly, although the $213$-avoiding and $321$-avoiding permutations are equinumerous~\cite{SimSch-EJC-85}, the new superpatterns we have found for the $321$-avoiding permutations are significantly smaller than the superpatterns found by Bannister \emph{et al.} for the $213$-avoiding permutations.

Our investigation points to several additional questions about superpatterns and universal point sets that we have not yet been able to answer, and that we leave for future research:
\begin{itemize}
\item Bannister \emph{et al.} showed that every proper subclass of the $213$-avoiding permutations has superpatterns of near-linear size. Does a similar result hold for the $321$-avoiding permutations?
\item An important subclass of the $st$-planar graphs, for which we have been unable to provide improved universal point sets, are the (transitively reduced) directed series-parallel graphs. The dominance drawings of these graphs correspond to the separable permutations, permutations with $2413$ and $3142$ as forbidden patterns. Do these permutations have superpatterns of size smaller than the $n^2/2-O(n)$ bound known for the superpatterns of all permutations?
\item Can our $O(n^{3/2})$ bound on the size of superpatterns for $321$-avoiding permutations be extended to a bound of $O(n^{2-1/(k-1)})$ on $k,k-1,\dots,3,2,1$-avoiding permutations? If so we would also obtain similarly sized universal point sets for dominance drawings of Hasse diagrams of the partial orders of dimension two and width~$k$.
\item Can we prove any nontrivial lower bounds on the size of superpatterns for $321$-avoiding permutations or their subclasses?
\end{itemize}

{\raggedright
\bibliographystyle{abuser}
\bibliography{paper}}
\end{document}